\newcommand{\blind}{0}
\newtheorem{theorem}{Theorem}
\newtheorem{lemma}{Lemma}
\newtheorem{corollary}{Corollary}
\newtheorem{proposition}{Proposition}
\theoremstyle{definition}
\newtheorem{remark}{Remark}
\newtheorem{assumption}{Assumption}
\begin{document}

\newcommand\x{\mathbf{x}}
\newcommand\uu{\mathbf{u}}
\newcommand\vv{\mathbf{v}}
\newcommand\V{\mathbf{V}}
\newcommand\y{\mathbf{y}}
\newcommand\Z{\mathbf{Z}}
\newcommand\Q{\mathbf{Q}}
\newcommand\PP{\mathbf{P}}
\newcommand\X{\mathbf{X}}
\newcommand\Y{\mathbf{Y}}
\newcommand\f{\mathbf{f}}
\newcommand\e{\mathbf{e}}

\newcommand\SSigma{\boldsymbol{\Sigma}}
\newcommand\halpha{\hat{\alpha}}
\newcommand\PPi{\boldsymbol{\Pi}}
\newcommand\bbeta{\boldsymbol{\beta}}
\newcommand\colm[1]{{\color{magenta}#1}}

\def\spacingset#1{\renewcommand{\baselinestretch}%
{#1}\small\normalsize} \spacingset{1}


\if0\blind
{
  \title{ \Large\textbf{The First-stage F Test with Many Weak Instruments\footnote{We are grateful to Anna Mikusheva, Stanislav Anatolyev, Federico Crudu and  participants of the SWETA2023 workshop, IAAE2023 and AMES2023 for helpful comments.} }}
  
\author{Zhenhong Huang}
\affil{\normalsize{\textit{Department of Statistics and Actuarial Science, The University of Hong Kong}} \protect \\ \small{\textit{(e-mail: zhhuang7@connect.hku.hk)}}}

\author{Chen Wang}
\affil{\normalsize{\textit{Department of Statistics and Actuarial Science, The University of Hong Kong}} \protect\\ \small{\textit{(e-mail: stacw@hku.hk)}}}

\author{Jianfeng Yao}
\affil{\normalsize{\textit{School of Data Science, Chinese University of Hong Kong (Shenzhen)} \protect \\ \small{\textit{(e-mail: jeffyao@cuhk.edu.cn)}}}
}

    \date{}
   \maketitle}
 \fi

\if1\blind
{
  \bigskip
  \bigskip
  \bigskip
  \begin{center}
    {\LARGE\bf Title}
\end{center}
  \medskip
} \fi

\bigskip
\begin{abstract}
\noindent A widely adopted approach for detecting weak instruments is to use the first-stage $F$ statistic. While this method was developed with a fixed number of instruments, its performance with many instruments remains insufficiently explored. We show that the first-stage $F$ test exhibits distorted sizes for detecting many weak instruments, regardless of the choice of pretested estimators or Wald tests. These distortions occur due to the inadequate approximation using classical noncentral Chi-squared distributions. As a byproduct of our main result, we present an alternative approach to pre-test many weak instruments with the corrected first-stage $F$ statistic. An empirical illustration with \cite{angrist1991does}'s returns to education data confirms its usefulness.
\end{abstract}

\noindent%
Keywords:  weak instruments, many instruments, $F$ test, size
distortions\\
JEL Classiﬁcation numbers: C12, C26\\
Word Count: 5740
\vfill

\newpage
\spacingset{1.8} 
\section{Introduction}
\label{sec:intro}
The first-stage $F$ statistic introduced by \citet[hereafter referred to as SY2005]{stock2002testing} is commonly used to detect weak instruments in empirical research. Evidence of its popularity can be found in American Economic Review, where 15 of 17 papers published between 2014 and 2018 reported at least one first-stage $F$ statistic  \citep{andrews2019weak}.
However, this approach was originally developed for a fixed number of instrumental variables (IVs), and does not address the case of a large number of instruments, which is commonly encountered in practice \citep[see, e.g.,][]{angrist1991does, dobbie2018effects,bhuller2020incarceration}.

Several studies have pointed out limitations of applying SY2005's $F$ test with many instruments. For example, \cite{hansen2008estimation} demonstrated through empirical examples and simulations that a low $F$ statistic does not necessarily indicate weak instruments.  More recently, \citet[hereafter referred to as MS2022]{mikusheva2021inference} described that the classical $F$ test can mistakenly identify weak instruments mainly due to the insufficiency of the conventional measure for instrument strength, known as the concentration parameter. However, these studies only narratively discussed the unreliability of the $F$ test.  The theoretical basis for not recommending the $F$ test in practice has yet to be established.

In this paper, we study the asymptotic behavior of the first-stage $F$ statistic within the many-instrument framework, where the number of instruments and the sample size go to infinity simultaneously and proportionally. We show that the more appropriate distribution of the $F$ statistic shifts to the normal distribution, instead of the conventional noncentral Chi-squared distribution. The inadequacy of the noncentral Chi-squared distribution provides poor finite sample approximations to the $F$ statistic with many instruments, leading to size distortion of the classical $F$ test. These size distortions occur regardless of the pretested IV estimator or Wald test and become increasingly severe as the number of instruments approaches the sample size.

Our second goal is to correct SY2005's two-step procedure to enhance the usability of the $F$ test with many instruments. Apart from the inadequacy of the noncentral Chi-squared distribution, SY2005's two-step procedure suffers from the insufficiency of the concentration parameter when measuring instrument strength. In the case of many instruments,  \cite{chao2005consistent} and MS2022 show that the appropriate measure is the \textit{re-scaled} concentration parameter, which is the ratio of the concentration parameter over the \textit{square root} of
the number of instruments.  In our asymptotic result, the re-scaled concentration parameter appears in the centering term  of the $F$ statistic. Building on this, we propose a two-step procedure based on the $F$ statistic to detect many weak instruments that is analogous to that of MS2022. Our proposed statistic is directly derived from the classical $F$ statistic and follows the standard normal distribution, making it both conceptually familiar and straightforward to apply.

By identifying the deficiencies of the first-stage $F$ statistic with many instruments, this study contributes to the literature on discussing its limitations and implications for empirical analysis. In the case of a fixed number of instruments, \cite{lee2022valid} and \cite{keane2023instrument} focus on the performance of the IV t-test and show that using the rule-of-thumb $F > 10$ as a diagnostic cannot guarantee its well-controlled size and power. They further suggest that a higher threshold should be adopted in practice. Our study provides theoretical justification for the unreliability of the $F$ test to gauge instrument strength in many-instrument settings.

Additionally, this study contributes to  the literature on measuring the strength of many instruments. \cite{hahn2002new}
proposed a test to examine the adequacy of the standard asymptotic result in IV regression models. They argue that if the test rejects their null, then weakness in instruments may arise. However, \cite{lee2012hahn} proved that it is indeed a test for the exogeneity of the instruments. MS2022 and \cite{carrasco2022testing} considered heteroscedastic models and  proposed novel $F$-type tests for many weak instruments. Our study focuses on the original $F$ test statistic and makes corrections for the effects of many instruments.

The paper is organized as follows. In Section \ref{sec:model}, we introduce the model, followed by a discussion of the concentration parameter. In Section \ref{sec:pitfall}, we show the unreliability of the first-stage $F$ test by proving its size distortions through the noncentral Chi-squared approximation. In Section \ref{sec:corF}, we propose a two-step procedure using the first-stage $F$ statistic with many instruments.  Section \ref{sec:emp} presents an analysis of the returns to education data in \cite{angrist1991does}.  Section \ref{sec:con}  concludes with some further discussions.

\section{Model setup} \label{sec:model}
We consider the following model:
\begin{equation}
    y_i=Y_i\beta+u_i,
\end{equation}
\begin{equation} \label{1stage}
    Y_i=\pi'\Z_i+v_i,
\end{equation} for $i=1,\dots,n$, where $y_i$ is a scalar outcome, $Y_i$ is a scalar endogenous variable, $\Z_i$ is a $K_n\times 1$ vector of instrument variables. Errors $(u_i,v_i)$ have zero mean, covariance $\sigma_{vu}$ and variances $\sigma_{uu}^2$ and $\sigma_{vv}^2$, respectively.  We denote by $\y$, $\Y$, $\uu$, and $\vv$ the  $n\times 1$ vectors that collect the corresponding scalars, $\Z=(\Z_1',\dots,\Z_n')'$  the $n\times K_n$ matrix of observations on the $K_n$ instrumental variables. Moreover, $\PP_Z=\Z(\Z'\Z)^{-1}\Z'$, and $\mathbf{M}_Z=\mathbf{I}_n-\PP_Z$ are two projection matrices, and $\mathbf{D}_Z=\mathrm{diag}(P_{11},\dots,P_{nn})$ is the diagonal matrix containing the diagonal terms of $\PP_Z$. 

The behaviors of IV estimation and inference methods crucially depend on the magnitude of the concentration parameter, 
\begin{equation}
    \mu_n^2=\frac{\pi'\Z'\Z\pi}{\sigma_{vv}^2},
\end{equation} which characterizes the strength of instruments. When $K_n\equiv K$,  SY2005 demonstrated that a small value of $\mu_n^2$ indicates weak instruments. When $K_n \rightarrow \infty$, a more appropriate measure of the strength of instruments is   ${\mu_n^2}/{\sqrt{K_n}}$ that leverages the effect of many instruments.  \cite{chao2005consistent} showed that  the bias-corrected 2SLS (B2SLS) estimator \citep{nagar1959bias} estimator, the limited information maximum likelihood (LIML) estimator \citep{anderson1949estimation} and the jackknife instrumental variable estimator (JIVE) \citep{angrist1995split}  are consistent only when $\mu_n^2$ grows faster than $\sqrt{K_n}$. Wald-tests based on the above estimators therefore over-reject when  $\mu_n^2/ \sqrt{K_n}$ is bounded. Furthermore, MS2022 showed that there exists no consistent test for testing $\beta=\beta_0$ when $\mu_n^2/\sqrt{K_n}$ stays bounded; for this reason, they defined the instruments to be weak if $\mu_n^2/ \sqrt{K_n}$ stays bounded. 
We therefore focus on the measure $\mu_n^2/ \sqrt{K_n}$, which characterizes instrument strength within the many instruments framework.

\section{Size distortions of the classical $F$ test} \label{sec:pitfall}
In this section, we first review SY2005's influential $F$ test for detecting weak instruments, and show that it has distorted sizes when detecting many weak instruments\footnote{Stock-Yogo also showed that the $F$ test remains valid when $K_n^4/n \rightarrow 0$. However, this condition in fact requires very small $K_n$. For example, when the sample size is large enough to reach 10000, the number of instruments should be much smaller than 10 to satisfy the asymptotic scheme. Therefore, this setting cannot cover practical situations where $K_n$ is in hundreds.}. 

SY2005 defines instruments to be weak if the bias of IV estimators (e.g., the 2SLS-OLS relative bias) or rejection rate of IV-Wald tests (e.g., the 2SLS-Wald test) exceeds a predetermined tolerance level (e.g., 10\%).  They further showed that $\mu_n^2$ can fully determine both the level of the estimation bias and rejection rate. Therefore, in SY2005's first step, a theoretical value of $\mu_n^2=\mu_0^2$ that indicates weak instruments is obtained. In the second step, SY2005 proposed to use the first-stage $F$ statistic to test $H_0^{SY}:\mu_n^{2}\leq \mu_0^2$ and showed that: 
\begin{equation} \label{SY_F}
    F=\frac{\Y'\PP_{Z}\Y/K}{\Y'\mathbf{M}_{Z}\Y/(n-K)} \stackrel{d}{\rightarrow} \frac{\chi^2_{K}(\mu_0^2)}{K}
\end{equation}
when $K_n\equiv K$, where $\chi^2_{K}(\mu_0^2)$ denotes the non-central Chi-squared distribution with $K$ degrees of freedom and  noncentrality parameter $\mu_0^2$. To summarize, SY2005's two-step testing procedure for weak instruments is formulated as follows:
\begin{enumerate}
    \item Obtain $\mu_0^2$ by controlling the worst estimation bias of IV estimators or worst size distortions of IV-Wald tests.  
    \item Determine a critical value for $F$ by (\ref{SY_F}).
\end{enumerate}

We first examine the empirical sizes of the first-stage $F$ statistic using simulations. Let  $n=1000$ and  $K_n=5,300,500$ and 800. Consider  $\beta=1$, $\boldsymbol{Z}_i \stackrel{i.i.d.}{\sim} N_{K_n}(\mathbf{0},\mathbf{I}_{K_n})$, 
$\{v_i\}_{i=1}^n$ are i.i.d. normal with $\sigma_{vv}^2=1$, and $\mu^2_0=5$ and 500.
\begin{table}[]
    \centering
    \begin{tabular}{c|ccccc}
    \hline\hline
        $K_n=$ & 5 & 300 & 500 & 800 \\
        \hline
        $\mu_0^2=5$ & 5.2 & 8.45 & 12.6 &23.0 \\
        $\mu_0^2=500$ & 5.1 & 8.3 & 12.8 &22.8  \\
        $ \Phi\left(\sqrt{1-\frac{K_n}{n}}\Phi^{-1}(0.05)\right)$ & 5.1 & 8.4 & 12.2 &23.1\\
        \hline\hline
    \end{tabular}
    \caption{First two rows: empirical sizes of the conventional $F$ test. Third row: theoretical sizes predicted by Theorem \ref{size_F}.  Replication time is 2,000.  }
    \label{tab:size_conF}
\end{table}
The first two rows of Table  \ref{tab:size_conF} report that the conventional $F$ test has correct sizes  with a fixed number of instrument, but over-rejects $H_0^{SY}$ when the number of instruments becomes large, regardless of the magnitude of $\mu_0^2$.  Moreover, the over-rejection phenomenon gets increasingly severe when $K_n$ gets close to $n$. For example, when $\mu_0^2=5$ and $K_n$ increases from 500 to 800, the empirical sizes increases from 12.6\% to
23\%, which both far exceed the nominal level 5\%. 

It is natural to expect that the distribution in (\ref{SY_F}) can explain the size distortion phenomenon in Table  \ref{tab:size_conF} after letting $K \rightarrow \infty$. However, the expectation for this sequential limit scheme (SEQ-L: $n\rightarrow\infty$, followed  by $K_n\rightarrow\infty$)  turns out to be incorrect. Specifically, after renormalizing the noncentral Chi-squared distribution, the SEQ-L will provide the CLT: $ \sqrt{K_n}\left(F-1-{\mu_n^2}/{K_n }\right)\stackrel{d}{\rightarrow}N(0,2)$ that leads to the following result:
\begin{proposition} \label{theo:seql}
    Under the SEQ-L, we have
    \begin{equation*}
    \mathbb{P}\left( F > \frac{q_{\tau}^{\chi^2_{K_n}(\mu_0^2)}}{K_n}\right) \rightarrow \tau.
\end{equation*}
\end{proposition}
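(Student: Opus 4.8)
The plan is to evaluate the iterated limit $\lim_{K_n\to\infty}\lim_{n\to\infty}$ underlying the SEQ-L scheme at the boundary $\mu_n^2=\mu_0^2$ of $H_0^{SY}$, where the rejection probability equals the size. For a fixed number of instruments $K_n\equiv K$, the inner limit is immediate from (\ref{SY_F}): since $F\stackrel{d}{\rightarrow}\chi^2_K(\mu_0^2)/K$ and the constant $q_\tau^{\chi^2_K(\mu_0^2)}/K$ is a continuity point of that limit law, the portmanteau theorem gives $\lim_{n\to\infty}\mathbb{P}(F>q_\tau^{\chi^2_K(\mu_0^2)}/K)=\mathbb{P}(\chi^2_K(\mu_0^2)/K>q_\tau^{\chi^2_K(\mu_0^2)}/K)$, and the right-hand side equals $\tau$ exactly, by the definition of the upper-$\tau$ quantile $q_\tau^{\chi^2_K(\mu_0^2)}$. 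Since this value is free of $K$, the outer limit $K\to\infty$ leaves it unchanged, which already establishes the claim.

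To connect with the normal approximation that motivates the statement, I would also carry out the second limit explicitly. After the first limit $F$ is distributed as $\chi^2_{K_n}(\mu_0^2)/K_n$; using $\mathbb{E}\,\chi^2_{K_n}(\mu_0^2)=K_n+\mu_0^2$ and $\mathrm{Var}\,\chi^2_{K_n}(\mu_0^2)=2K_n+4\mu_0^2$, together with the central limit theorem for a sum of $K_n$ independent squared normals (with $\mu_0^2$ held fixed), one obtains $\sqrt{K_n}\,(F-1-\mu_0^2/K_n)\stackrel{d}{\rightarrow}N(0,2)$. The same CLT governs the critical value: the upper-$\tau$ quantile satisfies $q_\tau^{\chi^2_{K_n}(\mu_0^2)}=K_n+\mu_0^2+\sqrt{2K_n}\,z_\tau+o(\sqrt{K_n})$, where $z_\tau=\Phi^{-1}(1-\tau)$, so that $(q_\tau^{\chi^2_{K_n}(\mu_0^2)}-K_n-\mu_0^2)/\sqrt{K_n}\to\sqrt{2}\,z_\tau$.

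Centering and scaling the event by $\sqrt{K_n}$ then gives $\mathbb{P}(F>q_\tau^{\chi^2_{K_n}(\mu_0^2)}/K_n)=\mathbb{P}(\sqrt{K_n}(F-1-\mu_0^2/K_n)>(q_\tau^{\chi^2_{K_n}(\mu_0^2)}-K_n-\mu_0^2)/\sqrt{K_n})$, and combining the weak limit $N(0,2)$ with the deterministic threshold tending to $\sqrt{2}\,z_\tau$ yields $\mathbb{P}(N(0,2)>\sqrt{2}\,z_\tau)=\mathbb{P}(N(0,1)>z_\tau)=\tau$. I expect the only genuine obstacle to be the joint treatment of the moving threshold and the weak limit: one must invoke convergence of quantiles from convergence in distribution to a continuous, strictly increasing limit in order to justify the quantile expansion, and a Slutsky- or P\'olya-type argument (the uniform convergence of the continuous $N(0,2)$ distribution function) to pass to the limit through a threshold that is itself varying with $K_n$. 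Everything else reduces to the standard moment and CLT computations for the noncentral $\chi^2$ law and is routine.
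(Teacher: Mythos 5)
Your proposal is correct, and it actually contains two proofs. Your second and third paragraphs reproduce the paper's own argument almost exactly: the paper also writes the rejection probability as $\mathbb{P}\bigl(2^{-1/2}\sqrt{K_n}(F-1-\mu_0^2/K_n)>2^{-1/2}\sqrt{K_n}(q_{\tau}^{\chi^2_{K_n}(\mu_0^2)}/K_n-1-\mu_0^2/K_n)\bigr)$, expands the noncentral Chi-squared quantile so that the threshold tends to $\Phi^{-1}(1-\tau)$, and invokes the SEQ-L CLT $\sqrt{K_n}(F-1-\mu_0^2/K_n)\stackrel{d}{\rightarrow}N(0,2)$; your attention to the moving threshold (P\'olya/Slutsky) is the right technical care, which the paper absorbs into an $o(1)$ term. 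Your first paragraph, however, is a genuinely more elementary route that the paper does not take: for each fixed $K$ the inner limit already equals $\tau$ \emph{exactly}, by the continuity of the noncentral Chi-squared law and the definition of its upper-$\tau$ quantile, so the outer limit in $K$ is trivial and no CLT or quantile expansion is needed at all. What the paper's longer route buys is the side-by-side comparison with the SIM-L computation in Corollary~\ref{size_F}: by forcing both schemes through the same normalization $\sqrt{K_n}(F-1-\mu_0^2/K_n)$ against the same threshold $\approx\sqrt{2/K_n}\,\Phi^{-1}(1-\tau)$, it makes visible that the size distortion under SIM-L comes precisely from the variance mismatch ($2/(1-\alpha)$ versus $2$), whereas your shortcut, while cleaner for the proposition in isolation, does not expose that mechanism.
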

Proposition \ref{theo:seql} shows that the SEQ-L predicts the classical $F$ test to have correct sizes with many instruments. Therefore, it fails to characterize the size distortion phenomena observed in Table \ref{tab:size_conF}. Such inadequacy of the SEQ-L motivates us to study the asymptotic behaviour of the $F$ statistic under the simultaneous limit scheme (SIM-L), where $n$ and $K_n$ go to infinity simultaneously and proportionally.  The following assumptions are used in the sequel.

\begin{assumption}{(SIM-L)}\label{assum:1} As $n\rightarrow \infty$, $K_n/n \rightarrow\alpha \in (0,1)$.
\end{assumption} 

\begin{assumption}\label{assum:2} The first-stage errors $\{v_i\}_{i=1,\dots, n}$ are i.i.d. with finite fourth moment.
\end{assumption} 

Assumption \ref{assum:1}  is standard in the many IV literature which was initially introduced in \cite{bekker1994alternative}. 
Assumption \ref{assum:2} assumes the homoscedastic first-stage errors. Our results are established under homoscedasticity
as we focus on the behaviour of the original $F$ statistic, which was developed in such context.  Investigating the performance of the $F$ statistic under heteroscedastiticty is beyond the scope of this paper. We establish the limiting distribution of the first-stage $F$ statistic for a large $K_n$ in the following theorem.

\begin{theorem} 
 \label{theo:F} When $\mu_n^2/K_n \rightarrow 0$ and under Assumptions \ref{assum:1} and \ref{assum:2}, as $n\rightarrow\infty$,
\begin{equation} \label{asym:F}
    \sqrt{n}\left(F-1-\frac{\mu_n^2}{K_n }\right)\stackrel{d}{\rightarrow}N\left(0,\frac{(\omega-\alpha^2)\mathrm{E}(v_1^4)+(2\alpha-3\omega+\alpha^2)\sigma_{vv}^4}{\alpha^2(1-\alpha)^2\sigma_{vv}^4}\right),
\end{equation}
where $ \omega=\lim _{n \rightarrow \infty} \frac{1}{n}\sum_{i=1}^nP_{ii}^2$.
\end{theorem}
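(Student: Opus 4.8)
The plan is to reduce the $F$ statistic to a single centered quadratic form in the first-stage errors $\vv$ and then invoke a central limit theorem for quadratic forms. Throughout write $K=K_n$ and $\sigma^2=\sigma_{vv}^2$, and substitute $\Y=\Z\pi+\vv$ into the numerator and denominator. Since $\PP_Z\Z=\Z$ and $\mathbf{M}_Z\Z=\mathbf{0}$, one gets
$$\Y'\PP_Z\Y=\sigma^2\mu_n^2+2\pi'\Z'\vv+\vv'\PP_Z\vv,\qquad \Y'\mathbf{M}_Z\Y=\vv'\mathbf{M}_Z\vv,$$
so that $F=\frac{n-K}{K}\cdot\frac{\Y'\PP_Z\Y}{\Y'\mathbf{M}_Z\Y}$. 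Taking expectations conditionally on $\Z$, the numerator has mean $\sigma^2(\mu_n^2+K)$ and the denominator mean $\sigma^2(n-K)$, and the ratio of these leading terms is exactly $1+\mu_n^2/K$, which fixes the centering in (\ref{asym:F}).

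Next I would linearize. Writing numerator and denominator as mean-plus-fluctuation $\bar A+\tilde A$, $\bar B+\tilde B$ and using $\frac{1+x}{1+y}=1+x-y+O(x^2+y^2)$, I obtain
$$F-1-\frac{\mu_n^2}{K}=\Big(1+\tfrac{\mu_n^2}{K}\Big)\Big(\tfrac{\tilde A}{\bar A}-\tfrac{\tilde B}{\bar B}\Big)+\text{remainder}.$$
The fluctuations are $O_p(n^{-1/2})$ (their variances are $O(1/n)$ by the quadratic-form variance formula), so $\sqrt{n}\times\text{remainder}=o_p(1)$ and the prefactor $1+\mu_n^2/K\to1$ contributes nothing after scaling. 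The linear cross term $2\pi'\Z'\vv$ is also asymptotically negligible: its contribution to $\sqrt{n}\,\tilde A/\bar A$ has variance of order $n\mu_n^2/K^2=(n/K)(\mu_n^2/K)\to0$ under the assumption $\mu_n^2/K_n\to0$. Crucially the deterministic centerings cancel, leaving
$$\sqrt{n}\Big(F-1-\frac{\mu_n^2}{K}\Big)=\frac{\sqrt{n}}{\sigma^2}\,\vv'\mathbf{W}\vv+o_p(1),\qquad \mathbf{W}=\frac{n}{K(n-K)}\PP_Z-\frac{1}{n-K}\mathbf{I}_n,$$
a quadratic form whose matrix satisfies $\mathrm{tr}(\mathbf{W})=0$, so $\mathrm{E}[\vv'\mathbf{W}\vv\mid\Z]=0$.

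The main work is the central limit theorem for $\sqrt{n}\,\vv'\mathbf{W}\vv$, which I would establish conditionally on $\Z$ via a martingale central limit theorem applied to the sequential differences $D_k=w_{kk}(v_k^2-\sigma^2)+2v_k\sum_{j<k}w_{kj}v_j$, equivalently by handling the diagonal part $\sum_i w_{ii}(v_i^2-\sigma^2)$ and the off-diagonal bilinear part $\sum_{i\ne j}w_{ij}v_iv_j$. The key observation is that $\PP_Z$ is idempotent, so $\mathbf{W}$ has only two eigenvalues, $1/K$ with multiplicity $K$ and $-1/(n-K)$ with multiplicity $n-K$; hence $\max_i\lambda_i(\mathbf{W})^2/\mathrm{tr}(\mathbf{W}^2)=O(1/n)\to0$, giving the Lindeberg condition for the bilinear part, while $|w_{ii}|\le \frac{n}{K(n-K)}=O(1/n)$ (using $0\le P_{ii}\le1$) together with $\mathrm{E}(v_1^4)<\infty$ gives Lindeberg for the diagonal part. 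This is the step I expect to be most delicate, both in verifying the martingale conditions and in transferring the conditional limit to an unconditional one when $\Z$ is random; the latter should follow because the limiting variance depends on $\Z$ only through $\tfrac1n\sum_i P_{ii}^2\to\omega$, so a Slutsky/characteristic-function argument closes the gap.

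Finally I would compute the variance. Using $\mathrm{Var}(\vv'\mathbf{W}\vv\mid\Z)=(\mathrm{E}(v_1^4)-3\sigma^4)\sum_i w_{ii}^2+2\sigma^4\,\mathrm{tr}(\mathbf{W}^2)$, direct calculation with $\PP_Z^2=\PP_Z$ gives $\mathrm{tr}(\mathbf{W}^2)=\frac{n}{K(n-K)}$ and $\sum_i w_{ii}^2=\frac{n^2}{K^2(n-K)^2}\sum_i P_{ii}^2-\frac{n}{(n-K)^2}$. Multiplying by $n/\sigma^4$ and letting $K/n\to\alpha$, $\tfrac1n\sum_i P_{ii}^2\to\omega$ yields, after collecting the $\mathrm{E}(v_1^4)$ and $\sigma^4$ contributions, exactly the asymptotic variance $\frac{(\omega-\alpha^2)\mathrm{E}(v_1^4)+(2\alpha-3\omega+\alpha^2)\sigma_{vv}^4}{\alpha^2(1-\alpha)^2\sigma_{vv}^4}$ in (\ref{asym:F}), completing the proof.
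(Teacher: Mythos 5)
Your proposal is correct, and its skeleton is the same as the paper's: substitute $\Y=\Z\pi+\vv$, discard the cross term $\pi'\Z'\vv$ using $\mu_n^2=o(K_n)$, and linearize the ratio of quadratic forms around its mean (your explicit Taylor expansion of $(1+x)/(1+y)$ is the Delta method the paper applies). Where you genuinely diverge is the CLT step. The paper first establishes a \emph{bivariate} CLT for $\bigl(\vv'\vv-n\sigma_{vv}^2,\ \vv'\PP_Z\vv-K_n\sigma_{vv}^2\bigr)/\sqrt{n}$ by invoking an external joint CLT for sesquilinear forms (Lemma \ref{lem:F}, via Theorem 2 of \cite{wang2014joint}), so the $\omega$-dependence of the limit arrives prepackaged in that lemma's covariance matrix, and only then takes the linear combination dictated by the gradient. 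You do the linear combination first, collapsing everything into the single trace-zero quadratic form $\vv'\mathbf{W}\vv$ with $\mathbf{W}=\tfrac{n}{K(n-K)}\PP_Z-\tfrac{1}{n-K}\mathbf{I}_n$, and then prove a \emph{univariate} CLT for that form by a martingale argument. Your bookkeeping checks out: $\mathrm{tr}(\mathbf{W})=0$, $\mathrm{tr}(\mathbf{W}^2)=n/(K(n-K))$, and $\sum_i w_{ii}^2$ expressed through $\sum_i P_{ii}^2$ reproduce the stated asymptotic variance exactly, and the two-eigenvalue structure of $\mathbf{W}$ (eigenvalues $1/K$ and $-1/(n-K)$) does make the negligibility and Lindeberg conditions essentially free. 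What your route buys is self-containedness and a cleaner one-dimensional limit object; what it costs is that the martingale CLT for the quadratic form — the step you rightly flag as the delicate one, including the conditional-to-unconditional transfer — must actually be carried out, whereas the paper outsources precisely that work to the cited joint-CLT theorem. Both arguments are valid and lead to the same limit.
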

The condition  $\mu_n^2 =o(K_n)$ is relatively weak as it
covers both the weakly identified and strongly identified cases. This condition is also made in MS2022. The asymptotic normality of the $F$ statistic, as shown in (\ref{asym:F}), stands in stark contrast to the conventional Chi-squared distribution, which only holds for a fixed number of instruments.  Applying Theorem \ref{theo:F}, the following corollary confirmed the size distortions of the classical $F$ test observed in Table \ref{tab:size_conF}.
 \begin{corollary}\label{size_F} 
When $\mu_n^2/K_n\rightarrow 0$, under Assumptions \ref{assum:1} and \ref{assum:2}, as $n\rightarrow\infty$, 
\begin{equation*}
    \mathbb{P}\left( F > \frac{q_{\tau}^{\chi^2_{K_n}(\mu_0^2)}}{K_n}\right) \rightarrow \Phi\left(\sqrt{\frac{2\alpha(1-\alpha)^2\sigma_{vv}^4}{(\omega-\alpha^2)\mathrm{E}(v_1^4)+(2\alpha-3\omega+\alpha^2)\sigma_{vv}^4}}\Phi^{-1}(\tau)\right),
\end{equation*}
where $\tau$ is the significance level, $q_{\tau}^{\chi^2_{K_n}(\mu_0^2)}$  is the $(1-\tau)$-quantile of $\chi^2_{K_n}(\mu_0^2)$ distribution, and $\Phi^{-1}(\cdot)$ denotes the inverse cumulative distribution function of a standard normal random variable. Furthermore, suppose that $\omega=\alpha^2$ or $v_i$ have zero excess kurtosis, then
\begin{equation*}
    \mathbb{P}\left( F > \frac{q_{\tau}^{\chi^2_{K_n}(\mu_0^2)}}{K_n}\right) \rightarrow \Phi\left(\sqrt{1-\alpha}\Phi^{-1}(\tau)\right)>\tau.
\end{equation*}

 \end{corollary}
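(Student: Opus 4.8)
The plan is to combine the central limit theorem for $F$ in Theorem \ref{theo:F} with a Gaussian approximation to the noncentral Chi-squared critical value, and then read off the limiting rejection probability by Slutsky's theorem. Throughout I write $V$ for the limiting variance appearing in (\ref{asym:F}). The size of the test $H_0^{SY}:\mu_n^2\le\mu_0^2$ is evaluated at the boundary $\mu_n^2=\mu_0^2$; this forces $\mu_0^2/K_n\to0$ by the hypothesis of the theorem, which is the only control on $\mu_0^2$ the argument needs. First I would approximate the critical value. Writing $\chi^2_{K_n}(\mu_0^2)\stackrel{d}{=}\chi^2_{K_n-1}+(Z+\mu_0)^2$ with $Z\sim N(0,1)$ and $\mu_0=\sqrt{\mu_0^2}$, the leading part is a sum of $K_n-1$ i.i.d.\ terms while the extra summand is $O_p(1)$, so the classical CLT yields
\begin{equation*}
\frac{\chi^2_{K_n}(\mu_0^2)-(K_n+\mu_0^2)}{\sqrt{2(K_n+2\mu_0^2)}}\stackrel{d}{\rightarrow}N(0,1).
\end{equation*}
Since the limit distribution function $\Phi$ is continuous and strictly increasing, weak convergence transfers to convergence of quantiles, so the standardized upper-$\tau$ quantile satisfies
\begin{equation*}
a_n:=\frac{q_{\tau}^{\chi^2_{K_n}(\mu_0^2)}-(K_n+\mu_0^2)}{\sqrt{2(K_n+2\mu_0^2)}}\rightarrow \Phi^{-1}(1-\tau)=-\Phi^{-1}(\tau).
\end{equation*}

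Next I would show the standardized rejection threshold converges to a constant. Centering and scaling as in Theorem \ref{theo:F} and using $\mu_n^2=\mu_0^2$,
\begin{equation*}
\sqrt{n}\left(\frac{q_{\tau}^{\chi^2_{K_n}(\mu_0^2)}}{K_n}-1-\frac{\mu_n^2}{K_n}\right)=\sqrt{n}\,\frac{\sqrt{2(K_n+2\mu_0^2)}}{K_n}\,a_n=\sqrt{2\,\frac{n}{K_n}\,\frac{K_n+2\mu_0^2}{K_n}}\;a_n\rightarrow \sqrt{\tfrac{2}{\alpha}}\,\bigl(-\Phi^{-1}(\tau)\bigr),
\end{equation*}
using Assumption \ref{assum:1}, $\mu_0^2/K_n\to0$, and the previous step. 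The key observation is that the $\sqrt{n}$ is exactly absorbed by the factor $\sqrt{2(K_n+2\mu_0^2)}/K_n=O(1/\sqrt{K_n})=O(1/\sqrt{n})$, so only the plain convergence $a_n\to-\Phi^{-1}(\tau)$ is required, not a rate; the same absorption shows any bounded $\mu_n^2$ would contribute a term $\sqrt{n}(\mu_0^2-\mu_n^2)/K_n=O(1/\sqrt{n})\to0$.

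Finally, applying Slutsky's theorem to Theorem \ref{theo:F} gives
\begin{equation*}
\mathbb{P}\left(F>\frac{q_{\tau}^{\chi^2_{K_n}(\mu_0^2)}}{K_n}\right)\rightarrow \mathbb{P}\Bigl(N(0,V)>-\sqrt{\tfrac{2}{\alpha}}\,\Phi^{-1}(\tau)\Bigr)=\Phi\left(\frac{\sqrt{2/\alpha}}{\sqrt{V}}\,\Phi^{-1}(\tau)\right).
\end{equation*}
Substituting $V$ from (\ref{asym:F}) simplifies $\sqrt{2/\alpha}/\sqrt{V}$ to the coefficient in the first display of the corollary. The two special cases follow by direct substitution: setting $\omega=\alpha^2$ annihilates the $\mathrm{E}(v_1^4)$ term and reduces $2\alpha-3\omega+\alpha^2$ to $2\alpha(1-\alpha)$, while zero excess kurtosis $\mathrm{E}(v_1^4)=3\sigma_{vv}^4$ collapses the whole denominator to $2\alpha(1-\alpha)\sigma_{vv}^4$; in either case the coefficient becomes $\sqrt{1-\alpha}$. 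The strict inequality is then immediate: for $\alpha\in(0,1)$ and significance level $\tau<1/2$ we have $\sqrt{1-\alpha}<1$ and $\Phi^{-1}(\tau)<0$, so $\sqrt{1-\alpha}\,\Phi^{-1}(\tau)>\Phi^{-1}(\tau)$, and monotonicity of $\Phi$ gives $\Phi(\sqrt{1-\alpha}\,\Phi^{-1}(\tau))>\tau$.

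The main obstacle is conceptual rather than computational: justifying that the noncentral Chi-squared critical value may be replaced by its Gaussian-quantile proxy without a residual bias surviving the $\sqrt{n}$ normalization. The point to check is that the skewness of $\chi^2_{K_n}(\mu_0^2)$ is $O(K_n^{-1/2})$, so the Cornish--Fisher correction to $q_{\tau}^{\chi^2_{K_n}(\mu_0^2)}$ is $O(1)$ in raw units and hence $O(\sqrt{n}/K_n)=O(1/\sqrt{n})\to0$ after dividing by $K_n$ and multiplying by $\sqrt{n}$. This is precisely why the plain quantile convergence $a_n\to-\Phi^{-1}(\tau)$ (guaranteed by weak convergence together with continuity and strict monotonicity of $\Phi$), rather than a quantitative rate, is enough to close the second step.
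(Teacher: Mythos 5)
Your proof is correct and follows the same overall structure as the paper's: standardize the rejection threshold, show that $\sqrt{n}\bigl(q_{\tau}^{\chi^2_{K_n}(\mu_0^2)}/K_n-1-\mu_0^2/K_n\bigr)\to\sqrt{2/\alpha}\,\Phi^{-1}(1-\tau)$, and then invoke Theorem \ref{theo:F} together with Slutsky. The one place you genuinely diverge is the treatment of the critical value: the paper plugs in an explicit second-order normal approximation to $q_{\tau}^{\chi^2_{K_n}(\mu_0^2)}$ (with a quadratic term in $\Phi^{-1}(1-\tau)$ and an $O(1)$ remainder) and then expands, whereas you derive only the first-order quantile convergence $a_n\to-\Phi^{-1}(\tau)$ from the decomposition $\chi^2_{K_n}(\mu_0^2)\stackrel{d}{=}\chi^2_{K_n-1}+(Z+\mu_0)^2$, the classical CLT, and the standard fact that weak convergence to a law with continuous strictly increasing CDF implies convergence of quantiles. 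Your observation that the $\sqrt{n}$ normalization is exactly absorbed by the factor $\sqrt{2(K_n+2\mu_0^2)}/K_n=O(1/\sqrt{K_n})$, so that no quantitative rate on the quantile approximation is needed, is the right reason this shortcut is legitimate, and it shows the paper's more detailed expansion is not strictly necessary. You are also more explicit than the paper on a point it glosses over: the corollary tacitly evaluates the size at $\mu_n^2=\mu_0^2$ (or at least requires $\sqrt{n}(\mu_n^2-\mu_0^2)/K_n\to0$, automatic for bounded $\mu_n^2$), and your remark makes that hypothesis visible. The algebra for the variance ratio, the two special cases collapsing the denominator to $2\alpha(1-\alpha)\sigma_{vv}^4$, and the strict inequality via $\sqrt{1-\alpha}\,\Phi^{-1}(\tau)>\Phi^{-1}(\tau)$ for $\tau<1/2$ all check out.
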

Corollary \ref{size_F} theoretically identifies the limitation of the first-stage $F$ test with many instruments due to the poor approximation using the noncentral Chi-squared distribution.  Particularly, when dealing with asymptotically balanced instruments ($\omega=\alpha^2$) or mesokurtic first-stage errors, the classical $F$ test would be oversized.  Moreover, the size distortions become more severe as $\alpha$ approaches 1. 

Corollary \ref{size_F} shows that our result under the SIM-L successfully recognizes the size distortion phenomena. The ratio $\alpha$ plays a crucial role  as it depicts the effect of the magnitude of $K_n$ that is invisible under the SEQ-L. This difference between the asymptotic behaviours of $F$ under the SIM-L and SEQ-L allows us to explain from a theoretical perspective the over-rejection phenomenon of the classical $F$ test when the number of instruments is relatively large. The last row in Table \ref{tab:size_conF} reports the predicted sizes from Theorem \ref{size_F}, which aligns perfectly with the empirical counterpart in the first two rows.

Corollary \ref{size_F} also serves as a warning to researchers using the classical $F$ test to detect many weak instruments of 
the size distortion problem, no matter which IV estimator or IV-Wald test is pre-tested. For example, relying on the popular rule-of-thumb that compares $F$ and the cutoff of 10 can still fail to control the rejection rate of B2SLS-Wald test within 10\%. Therefore, empirical researchers are warned not to use the classical $F$ test to detect many weak instruments.

 \begin{remark}
     From Theorem \ref{size_F}, the classical $F$ test will have asymptotically correct size if and only if 
     \begin{equation*} 
         (\omega-\alpha^2)\mathrm{E}(v_1^4)+(5\alpha^2-3\omega-2\alpha^3)\sigma_{vv}^4=0.
     \end{equation*}
     However, verifying this condition is challenging since the moments of errors are typically unknown. Even if this condition is satisfied,  the first step of SY2005's procedure is invalid within many-instrument setup, making the classical $F$ test remains deficient, see detailed discussions in Appendix. 
 \end{remark}

\section{The corrected $F$ test for many weak instruments} \label{sec:corF}
To enhance the usability of the first-stage $F$ statistic, we present a new two-step procedure for many weak instruments. In the first step, we consider controlling the worst rejection rate of the B2SLS-Wald test as B2SLS is consistent in the homoscedasticity setting when $\mu_n^2/\sqrt{K_n}\rightarrow \infty$.  In the second step, we propose a corrected  $F$ test to assess the reliability of the B2SLS-Wald test.

We re-consider the behaviour of the B2SLS-Wald test statistic in Section 3.4 of SY2005:
\begin{equation}
    W=\frac{n(\hat{\bbeta}_{B2SLS}-\beta_0)^2}{\hat{V}},
\end{equation}
where 
\begin{equation*}
    \hat{\bbeta}_{B2SLS}=\frac{\Y'\PP_b\y}{\Y'\PP_b\Y}, \; \hat{V}=\frac{n-K_n}{\Y'\PP_b\Y}\hat{\sigma}_{uu}^2+\frac{K_n}{n-K_n}\frac{\hat{\uu}'\mathbf{M}_Z\hat{\uu}\Y'\mathbf{M}_Z\Y+(\Y'\mathbf{M}_Z\hat{\uu})^2}{(\Y'\PP_b\Y)^2},
\end{equation*}
 with $\PP_b=\PP_Z-K_n/n\mathbf{I}_n$, $\hat{\uu}=\y-\Y\hat{\bbeta}_{B2SLS}$ and $\hat{\sigma}_{uu}^2$ being the B2SLS-residuals estimator. To test for $H_0:{\mu_n^2}/{\sqrt{K_n}}\leq C$, we propose a corrected $F$ test using statistic
 \begin{equation} \label{eq:ftest}
    F_{c}=\sqrt{\frac{K_n(n-K_n)}{2n}}\left[F-1-\frac{C}{\sqrt{K_n}}\right],
\end{equation}
  where $C$ is a constant obtained in our first step that is formulated later. We establish the behaviour of the B2SLS-Wald statistic as follows:
\begin{theorem} \label{theo:wald}
     Let Assumptions \ref{assum:1} and \ref{assum:2} hold. Assume that  $\mu_n^2/K_n \rightarrow 0$ and (i) $n^{-1}\sum_{i=1}^n(P_{ii}-\alpha)^2 \rightarrow 0$ or (ii) $\{(u_i,v_i)\}_{i=1,\dots,n}$ are i.i.d. normal,  as $n\rightarrow\infty$,
     \begin{equation}
         W \stackrel{d}{\rightarrow} \frac{\xi^2}{1-2\rho\frac{\xi}{\nu}+\frac{\xi^2}{\nu^2}}, \label{dis:w}
     \end{equation}
     where $\xi$ and $\nu$ are two normal random variables with means 0 and $\sqrt{\frac{1-\alpha}{2}}\frac{\mu_n^2}{\sqrt{K_n}}$, respectively, unit variances and linear correlation coefficient $\rho=\frac{\sigma_{vu}}{\sqrt{\sigma_{vv}^2\sigma_{uu}^2+\sigma_{vu}^2}}$.
\end{theorem}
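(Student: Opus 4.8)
The plan is to work at $\beta=\beta_0$, so that $\uu=\y-\Y\beta_0$ and the asserted law in \eqref{dis:w} is the null distribution of $W$, and to reduce $W$ to a smooth function of a handful of quadratic and bilinear forms in the errors. First I would substitute the definitions to obtain $\hat{\bbeta}_{B2SLS}-\beta_0=\Y'\PP_b\uu/(\Y'\PP_b\Y)$ and hence $W=n(\Y'\PP_b\uu)^2/\{(\Y'\PP_b\Y)^2\hat{V}\}$. Writing $\Y=\f+\vv$ with $\f:=\Z\pi$ and using $\PP_Z\f=\f$, $\mathbf{M}_Z\f=\mathbf{0}$, every form splits into a deterministic signal part, a signal--error cross part, and a pure-error part. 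Because $\mu_n^2=o(K_n)$, the cross terms $\f'\uu$ and $\f'\vv$ are $O_p(\mu_n)=o_p(\sqrt{K_n})$ and are negligible next to the pure-error forms $\vv'\PP_b\uu$ and $\vv'\PP_b\vv$, which are $O_p(\sqrt{K_n})$; the only surviving signal contribution is the deterministic $(1-K_n/n)\f'\f=(1-\alpha)\mu_n^2\sigma_{vv}^2$, which supplies the mean shift in the denominator.

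Second, I would pin down the scalar $\hat{V}$. An order-of-magnitude check shows that its first summand $T_1=(n-K_n)\hat{\sigma}_{uu}^2/(\Y'\PP_b\Y)$ is only $O_p(\sqrt{n})$, whereas its second summand $T_2$ is $O_p(n)$, so $\hat{V}=T_2(1+o_p(1))$. For $T_2$ I would apply a law of large numbers to the $\mathbf{M}_Z$-based forms, giving $\uu'\mathbf{M}_Z\uu/(n-K_n)\to\sigma_{uu}^2$, $\vv'\mathbf{M}_Z\vv/(n-K_n)\to\sigma_{vv}^2$ and $\uu'\mathbf{M}_Z\vv/(n-K_n)\to\sigma_{vu}$ in probability. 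Substituting $\hat{\uu}=\uu-b\Y$ with $b:=\hat{\bbeta}_{B2SLS}-\beta_0=O_p(1)$ then converts $T_2$ into an explicit deterministic function of $b$, of $\Y'\PP_b\Y$, and of the three limits above; after cancellation $W$ becomes a fixed rational function of the two normalized forms alone.

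The crux is a joint central limit theorem for the error forms. I would set $\nu:=\Y'\PP_b\Y$ and $\xi:=\Y'\PP_b\uu$, each divided by its standard deviation, so that both have unit asymptotic variance, $\xi$ has mean $0$, and $\nu$ has mean $\sqrt{(1-\alpha)/2}\,\mu_n^2/\sqrt{K_n}$, and establish that $(\xi,\nu)$ converges to a bivariate normal vector with correlation $\rho$. This is where I expect the main difficulty: $\vv'\PP_b\vv$ is a quadratic form and $\vv'\PP_b\uu$ a bilinear form in dependent errors, so I would invoke a martingale-difference (de~Jong--type) CLT for quadratic forms under the simultaneous scheme of Assumption \ref{assum:1}, reading off the limiting covariance matrix from $\mathrm{tr}(\PP_b^2)=K_n(1-K_n/n)\to K_n(1-\alpha)$ together with the cross moments $\mathrm{E}[v_iu_jv_kv_l]$ and $\mathrm{E}[v_iu_jv_ku_l]$. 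The delicate point is the diagonal contribution proportional to $\sum_i(P_{ii}-K_n/n)^2$, weighted by the excess kurtosis of the errors; this is exactly what the two alternative hypotheses neutralize, since condition (i) forces $n^{-1}\sum_i(P_{ii}-\alpha)^2\to0$, while condition (ii) makes every excess-kurtosis coefficient vanish through the Isserlis moment identities for jointly normal $(u_i,v_i)$. Either route leaves only the off-diagonal trace terms, which deliver the stated means, unit variances, and correlation $\rho$.

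Finally, I would assemble the pieces through the continuous mapping theorem. Collecting the first three steps, $W$ equals, up to $o_p(1)$, a fixed rational function of $(\xi,\nu)$ whose denominator can be rewritten as $\{(\nu-\rho\xi)^2+(1-\rho^2)\xi^2\}/\nu^2$, hence is almost surely positive and the map is continuous on the support of the limit. The joint CLT of the third step together with the continuous mapping theorem then yields the limiting law $\xi^2/(1-2\rho\,\xi/\nu+\xi^2/\nu^2)$ asserted in \eqref{dis:w}.
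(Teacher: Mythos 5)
Your proposal is correct and mirrors the paper's argument: both reduce $W$ to a rational function of $Q_{Yu}=\Y'\PP_b\uu/\sqrt{K_n}$ and $Q_{YY}=\Y'\PP_b\Y/\sqrt{K_n}$ by discarding the negligible first summand of $\hat{V}$ and applying laws of large numbers to the $\mathbf{M}_Z$-forms (the paper's Lemma \ref{lem2:wald}), then conclude via a joint CLT for the two $\PP_b$-forms (the paper cites Theorem 2 of \cite{wang2014joint}; your de Jong--type martingale CLT with covariances read off $\mathrm{tr}(\PP_b^2)=K_n(1-K_n/n)$ is the same engine) and the continuous mapping theorem. Your explicit accounting of how hypotheses (i) and (ii) annihilate the $\sum_i(P_{ii}-K_n/n)^2$/excess-kurtosis corrections is a detail the paper leaves implicit in its citation, but the route is essentially identical.
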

Assumption (i) $n^{-1}\sum_{i=1}^n(P_{ii}-\alpha)^2 \rightarrow 0$ or (ii) i.i.d normal $\{(u_i,v_i)\}_{i=1,\dots,n}$ imposes conditions on instrument designs or errors, respectively. 
The former is known as the asymptotically balanced instruments design that is often imposed in the many IV literature, see, for example, \citep{hausman2012instrumental}, \citep{anatolyev2011specification} and \citep{wang2016bootstrap}.  We refer to \cite{anatolyev2017asymptotics} on the detailed discussions on this assumption.  Under either  Assumption (i) which implies $\omega=\alpha^2$ or Assumption (ii) which provides analytic error moments, $W$ converges in distribution to  a mixture of two normal random variables.  This result is largely different from the standard Chi-squared distribution that holds under a fixed number of instruments. It indicates that $W$ will behave close to the Chi-squared distribution only when ${\mu_n^2}/{\sqrt{K_n}}$ is unbounded. However, if ${\mu_n^2}/{\sqrt{K_n}}$ is bounded, the Chi-squared distribution will produce poor finite sample approximations and lead to size distortions. It further confirms that ${\mu_n^2}/{\sqrt{K_n}}$ is an adequate indicator for the strength of many instruments.
Therefore, based on (\ref{dis:w}), we can control the worst rejection rate
of the B2SLS-Wald test for a given tolerance level $T$:
\begin{equation*}
        \max_{\rho\in[-1,1]}\mathrm{P}\left( \frac{\xi^2}{1-2\rho\frac{\xi}{\nu}+\frac{\xi^2}{\nu^2}} \geq q_{\tau}^{\chi_1^2} \right) < T.
\end{equation*}
Using simulations,  a theoretical value of $\mu_n^2/\sqrt{K_n}$ that corresponds to the tolerance level $T$, denoted by $C$, can be determined. Consequently, the null hypothesis of many weak IVs can be formulated by $H_0:{\mu_n^2}/{\sqrt{K_n}}\leq C$, that can be tested using the $F_c$ statistic as follows:

\begin{theorem} \label{theo:fc}
     Under the assumptions of Theorem \ref{theo:wald}, as $n\rightarrow\infty$,
     \begin{equation}
        \mathrm{P}\left(F_c>\Phi^{-1}(1-\tau)\right)\rightarrow \tau. \label{Fc}
     \end{equation}
\end{theorem}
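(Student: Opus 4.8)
The plan is to read off the limiting law of $F_c$ directly from the asymptotic normality of $F$ established in Theorem \ref{theo:F}, and then compute the tail probability. The essential preliminary observation is that under either assumption (i) or assumption (ii) of Theorem \ref{theo:wald}, the variance appearing in (\ref{asym:F}) collapses to the single value $2/[\alpha(1-\alpha)]$. Under (i), the balance condition $n^{-1}\sum_i(P_{ii}-\alpha)^2\to 0$ forces $\omega=\alpha^2$, which annihilates the $\mathrm{E}(v_1^4)$ term and leaves numerator $2\alpha(1-\alpha)\sigma_{vv}^4$ over denominator $\alpha^2(1-\alpha)^2\sigma_{vv}^4$; under (ii), normality gives $\mathrm{E}(v_1^4)=3\sigma_{vv}^4$, and substituting this into the numerator $(\omega-\alpha^2)\cdot 3\sigma_{vv}^4+(2\alpha-3\omega+\alpha^2)\sigma_{vv}^4=2\alpha(1-\alpha)\sigma_{vv}^4$ produces the same simplification. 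Hence in both regimes
\[
\sqrt{n}\left(F-1-\frac{\mu_n^2}{K_n}\right)\stackrel{d}{\rightarrow}N\left(0,\frac{2}{\alpha(1-\alpha)}\right).
\]

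Next I would evaluate the rejection probability at the size-determining configuration, taking $\mu_n^2/\sqrt{K_n}\to C$ (the boundary of the null $H_0:\mu_n^2/\sqrt{K_n}\le C$). This choice automatically meets the hypothesis $\mu_n^2/K_n\to 0$ of Theorem \ref{theo:F}, since $\mu_n^2/K_n=(\mu_n^2/\sqrt{K_n})/\sqrt{K_n}\to 0$ as $K_n\to\infty$. I then split the centering in (\ref{eq:ftest}) as
\[
F-1-\frac{C}{\sqrt{K_n}}=\left(F-1-\frac{\mu_n^2}{K_n}\right)+\frac{1}{\sqrt{K_n}}\left(\frac{\mu_n^2}{\sqrt{K_n}}-C\right),
\]
which yields the decomposition
\[
F_c=\sqrt{\frac{K_n(n-K_n)}{2n^2}}\,\sqrt{n}\left(F-1-\frac{\mu_n^2}{K_n}\right)+\sqrt{\frac{n-K_n}{2n}}\left(\frac{\mu_n^2}{\sqrt{K_n}}-C\right).
\]

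The conclusion then follows from Slutsky's theorem. The deterministic prefactor obeys $\tfrac{K_n(n-K_n)}{2n^2}\to\tfrac{\alpha(1-\alpha)}{2}$, so by the first display the random term converges in distribution to $\sqrt{\alpha(1-\alpha)/2}\cdot N(0,2/[\alpha(1-\alpha)])=N(0,1)$, the two variance factors cancelling exactly by design of the normalization in $F_c$. The drift term converges to $\sqrt{(1-\alpha)/2}\,(\lim\mu_n^2/\sqrt{K_n}-C)=0$ at the boundary. Combining, $F_c\stackrel{d}{\rightarrow}N(0,1)$, and continuity of the standard normal CDF gives $\mathrm{P}(F_c>\Phi^{-1}(1-\tau))\to 1-\Phi(\Phi^{-1}(1-\tau))=\tau$. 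I would also note that the same computation delivers one-sided size control over the entire null: at an interior point with $\mu_n^2/\sqrt{K_n}\to C'<C$, the drift tends to the negative constant $\sqrt{(1-\alpha)/2}\,(C'-C)$, so the rejection probability falls below $\tau$.

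The argument is short once Theorem \ref{theo:F} is granted, since its asymptotic normality carries all the analytic weight. The only genuinely delicate point is the variance bookkeeping: checking that assumptions (i) and (ii) each reduce (\ref{asym:F}) to precisely $2/[\alpha(1-\alpha)]$, and that the prefactor $\sqrt{K_n(n-K_n)/(2n)}$ is exactly the scaling that cancels this variance to unity. This is algebra rather than a real obstacle, so I expect no substantive difficulty beyond invoking Theorem \ref{theo:F} and applying Slutsky's theorem cleanly.
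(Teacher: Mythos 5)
Your proposal is correct and follows essentially the route the paper intends: the paper gives no standalone proof of this theorem, but its supplementary argument for the unbalanced case shows the intended derivation is exactly yours — invoke Theorem \ref{theo:F}, check that either assumption (i) or (ii) collapses the variance to $2/[\alpha(1-\alpha)]$, and note that the prefactor in $F_c$ is calibrated to cancel it, with the drift vanishing at the boundary $\mu_n^2/\sqrt{K_n}\to C$. Your variance bookkeeping and the Slutsky decomposition are both accurate, and your added remark on one-sided size control in the interior of the null is a correct bonus not stated in the paper.
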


Finally, we propose the following two-step procedure to detect many weak instruments based on the first-stage $F$ statistic: 
\begin{enumerate}
    \item Obtain $C$ by controlling the worst asymptotic rejection rate of the B2SLS-Wald test at a tolerance level $T$:
    \item Use the $F_c$ test and (\ref{Fc}) to draw inference.
\end{enumerate}

One notable advantage of this two-step procedure is that the implementation of the first step is identical to that of Section 5 in MS2022, except for the different measures for the instrument strength, see discussions in Appendix. Therefore, one can directly obtain the upper bound $C$ without simulating the first-step using the relationship:
\begin{equation}
    C=\sqrt{\frac{2}{1-K_n/n}}C_0 \label{relationship}
\end{equation}
where $C_0$ is proposed to be 2.5 in MS2022. For example, if $K_n=100$ and $n=1000$ in practice, then $C=3.7$ and researcher can use the $F_c$ test to give a fast and reliable assessment of the instrument strength.

\begin{remark}
    As a byproduct of our main theorem, the proposed  $F_c$ test  is conceptually familiar and computationally simple. However, it is limited to the case of balanced instruments or normal errors and homoscedasticity. Therefore, 
    MS2022's $\widetilde{F}$ test is recommended in practice as it allows for unbalanced instruments and heteroscadasticity. Nevertherless, the $F_c$ test offers new insights for practitioners that are accustomed to reporting the $F$ statistic: it is more reliable to report the $F_c$ test statistic instead of the original $F$ test statistic when using many instruments\footnote{A minimum criterion for considering many instruments is $K_n/n \geq 0.05$, as highlighted in \cite{hansen2022econ}.}.
\end{remark}

\section{An empirical illustration: Return to education} 
\label{sec:emp}
In this section, we re-analyse the returns to education data of \cite{angrist1991does} (henceforth referred to as AK1991) using quarter of birth as an instrument for educational attainment, and construct confidence intervals for the strength of instruments. One of the specifications in the original AK1991 uses up to 180 instruments that include 30 quarter and year of birth interactions and 150 quarter and state of birth interactions. At the time of publication, the issue of weak instruments had received little attention. Later it has been widely suggested that the setup suffers from a weak instrument problem (\citealt{angrist1995split}; \citealt{bound1995problems}). MS2022 applied their proposed pre-test and  argued the instrument set is strong with the original full data. 

As the original sample size (329,509) is larger than usual for empirical research, we consider the sample size to be 0.1\% ($n=330$),  0.2\% ($n=660$), 0.5\% ($n=1650$) and  1\% 
 ($n=3300$) of the original data, more in line with the typical empirical application. We examine the specification with 180 instruments and 1530 instruments that extend the model by including the interactions among quarter and year and state of birth. We evaluate the performance of the first-stage $F$ statistic,  the $\widetilde{F}$ statistic and the $F_c$ statistic based on 1000 randomly chosen subsamples and report the results in Table  \ref{tab:emp}\footnote{A normality check with the Shapiro-Wilk test shows that the first-stage errors are plausibly normal ($p=0.68$) so that our proposed method is applicable.}. 

For the 0.1\% subsample with 180 instruments, the average $F$ statistics is 1.53, which is far below the conventional cut-off of 10. However, the average $\widetilde{F}$ statistic is 4.55, which exceeds MS2022's cutoff of 2.5. It provides an evidence that 0.1\%-scheme produces strong instruments subsamples. Our proposed $F_c$ turns out to be 2.65 ($C=5.2$ according to (\ref{relationship})), which also claims that the instrument set is strong. When the sample size increases to 660, the first-stage $F$ statistic is uninformative. While both our proposed $F_c$ test and the $\widetilde{F}$ test determine the instruments to be weak ($C=4.1$). The findings for the case of 1530 instruments are similar. In conclusion, our proposed method is informative to identify the strength of many instruments.

\begin{table}[]\renewcommand\arraystretch{0.8}
    \centering
    \begin{tabular}{cccccc}
    \hline
    \hline
           $n$ & $K_n$ & Avg.$F$ & Avg.$\Tilde{F}$   & Avg.$F_c$  \\
           \hline
         330 & 180 & 1.53 & 4.55  & 2.65   \\
         660 & 180 & 1.06 & 1.65 & 0.95   \\
        1650 & 1530 & 1.26 & 4.25 & 5.2 \\
         3300 & 1530 &1.06 & 1.35 & 0.78 \\
         \hline
         \hline
    \end{tabular}
    \caption{Empirical Results  \label{tab:emp}}
\end{table}

\section{Conclusion} \label{sec:con}
Empirical researchers often use a large number of  instruments in practice. 
In this paper, we investigate the behaviour of the first-stage $F$ statistic with many instruments. We establish that the first-stage $F$ statistic is asymptotically standard normal after appropriate normalization and recentering, which contrasts with the conventional noncentral Chi-squared distribution. We show that SY2005’s $F$ test will lead to size distortions for detecting many weak instruments, no matter which IV estimator or IV-Wald test is pretested. 

 As a byproduct of our main theory, we propose a two-step procedure for many weak instruments based on the $F$-statistic.   The proposed method is  conceptually familiar and computationally simple. This suggests that researchers can still assess the strength of many instruments relying on the $F$ statistic after proper corrections.

For future directions, it would be interesting to study the asymptotic behaviour of \cite{olea2013robust}'s effective $F$ statistic
under the many-instrument setting as it is robust to
heteroscedasticity, autocorrelation, and clustering. We conjecture that, after proper recentering and renormalizations, it would be asymptotically normal, indicating that the effective $F$ test would also have size distortions with many instruments. To establish such theoretical justifications, new tools such as the joint CLT for several sesquilinear forms under non-i.i.d. settings are needed.

\bibliographystyle{chicago}

\bibliography{Bibliography-MM-MC}

\begin{thebibliography}{}

\bibitem[\protect\citeauthoryear{Afendras and Markatou}{Afendras and Markatou}{2016}]{afendras2016uniform}
Afendras, G. and M.~Markatou (2016).
\newblock Uniform integrability of the ols estimators, and the convergence of their moments.
\newblock {\em Test\/}~{\em 25}, 775--784.

\bibitem[\protect\citeauthoryear{Anatolyev and Gospodinov}{Anatolyev and Gospodinov}{2011}]{anatolyev2011specification}
Anatolyev, S. and N.~Gospodinov (2011).
\newblock Specification testing in models with many instruments.
\newblock {\em Econometric Theory\/}~{\em 27\/}(2), 427--441.

\bibitem[\protect\citeauthoryear{Anatolyev and Yaskov}{Anatolyev and Yaskov}{2017}]{anatolyev2017asymptotics}
Anatolyev, S. and P.~Yaskov (2017).
\newblock Asymptotics of diagonal elements of projection matrices under many instruments/regressors.
\newblock {\em Econometric Theory\/}~{\em 33\/}(3), 717--738.

\bibitem[\protect\citeauthoryear{Anderson and Rubin}{Anderson and Rubin}{1949}]{anderson1949estimation}
Anderson, T.~W. and H.~Rubin (1949).
\newblock Estimation of the parameters of a single equation in a complete system of stochastic equations.
\newblock {\em The Annals of Mathematical Statistics\/}~{\em 20\/}(1), 46--63.

\bibitem[\protect\citeauthoryear{Andrews, Stock, and Sun}{Andrews et~al.}{2019}]{andrews2019weak}
Andrews, I., J.~H. Stock, and L.~Sun (2019).
\newblock Weak instruments in instrumental variables regression: Theory and practice.
\newblock {\em Annual Review of Economics\/}~{\em 11}, 727--753.

\bibitem[\protect\citeauthoryear{Angrist and Krueger}{Angrist and Krueger}{1991}]{angrist1991does}
Angrist, J.~D. and A.~B. Krueger (1991).
\newblock Does compulsory school attendance affect schooling and earnings?
\newblock {\em The Quarterly Journal of Economics\/}~{\em 106\/}(4), 979--1014.

\bibitem[\protect\citeauthoryear{Angrist and Krueger}{Angrist and Krueger}{1995}]{angrist1995split}
Angrist, J.~D. and A.~B. Krueger (1995).
\newblock Split-sample instrumental variables estimates of the return to schooling.
\newblock {\em Journal of Business \& Economic Statistics\/}~{\em 13\/}(2), 225--235.

\bibitem[\protect\citeauthoryear{Bekker}{Bekker}{1994}]{bekker1994alternative}
Bekker, P.~A. (1994).
\newblock Alternative approximations to the distributions of instrumental variable estimators.
\newblock {\em Econometrica\/}~{\em 62\/}(3), 657--681.

\bibitem[\protect\citeauthoryear{Bhuller, Dahl, L{\o}ken, and Mogstad}{Bhuller et~al.}{2020}]{bhuller2020incarceration}
Bhuller, M., G.~B. Dahl, K.~V. L{\o}ken, and M.~Mogstad (2020).
\newblock Incarceration, recidivism, and employment.
\newblock {\em Journal of Political Economy\/}~{\em 128\/}(4), 1269--1324.

\bibitem[\protect\citeauthoryear{Bound, Jaeger, and Baker}{Bound et~al.}{1995}]{bound1995problems}
Bound, J., D.~A. Jaeger, and R.~M. Baker (1995).
\newblock Problems with instrumental variables estimation when the correlation between the instruments and the endogenous explanatory variable is weak.
\newblock {\em Journal of the American Statistical Association\/}~{\em 90\/}(430), 443--450.

\bibitem[\protect\citeauthoryear{Carrasco and Doukali}{Carrasco and Doukali}{2022}]{carrasco2022testing}
Carrasco, M. and M.~Doukali (2022).
\newblock Testing overidentifying restrictions with many instruments and heteroscedasticity using regularised jackknife iv.
\newblock {\em The Econometrics Journal\/}~{\em 25\/}(1), 71--97.

\bibitem[\protect\citeauthoryear{Chao and Swanson}{Chao and Swanson}{2005}]{chao2005consistent}
Chao, J.~C. and N.~R. Swanson (2005).
\newblock Consistent estimation with a large number of weak instruments.
\newblock {\em Econometrica\/}~{\em 73\/}(5), 1673--1692.

\bibitem[\protect\citeauthoryear{Dobbie, Goldin, and Yang}{Dobbie et~al.}{2018}]{dobbie2018effects}
Dobbie, W., J.~Goldin, and C.~S. Yang (2018).
\newblock The effects of pretrial detention on conviction, future crime, and employment: Evidence from randomly assigned judges.
\newblock {\em American Economic Review\/}~{\em 108\/}(2), 201--40.

\bibitem[\protect\citeauthoryear{Hahn and Hausman}{Hahn and Hausman}{2002}]{hahn2002new}
Hahn, J. and J.~Hausman (2002).
\newblock A new specification test for the validity of instrumental variables.
\newblock {\em Econometrica\/}~{\em 70\/}(1), 163--189.

\bibitem[\protect\citeauthoryear{Hansen}{Hansen}{2022}]{hansen2022econ}
Hansen, B.~E. (2022).
\newblock {\em Econometrics}.
\newblock Princeton University Press.

\bibitem[\protect\citeauthoryear{Hansen, Hausman, and Newey}{Hansen et~al.}{2008}]{hansen2008estimation}
Hansen, C., J.~Hausman, and W.~Newey (2008).
\newblock Estimation with many instrumental variables.
\newblock {\em Journal of Business \& Economic Statistics\/}~{\em 26\/}(4), 398--422.

\bibitem[\protect\citeauthoryear{Hausman, Newey, Woutersen, Chao, and Swanson}{Hausman et~al.}{2012}]{hausman2012instrumental}
Hausman, J.~A., W.~K. Newey, T.~Woutersen, J.~C. Chao, and N.~R. Swanson (2012).
\newblock Instrumental variable estimation with heteroskedasticity and many instruments.
\newblock {\em Quantitative Economics\/}~{\em 3\/}(2), 211--255.

\bibitem[\protect\citeauthoryear{Huang, Wang, and Yao}{Huang et~al.}{2022}]{huang2022specification}
Huang, Z., C.~Wang, and J.~Yao (2022).
\newblock A specification test for the strength of instrumental variables.
\newblock {\em Manuscript\/}.

\bibitem[\protect\citeauthoryear{Keane and Neal}{Keane and Neal}{2023}]{keane2023instrument}
Keane, M. and T.~Neal (2023).
\newblock Instrument strength in iv estimation and inference: A guide to theory and practice.
\newblock {\em Journal of Econometrics\/}~{\em 235\/}(2), 1625--1653.

\bibitem[\protect\citeauthoryear{Lee, McCrary, Moreira, and Porter}{Lee et~al.}{2022}]{lee2022valid}
Lee, D.~S., J.~McCrary, M.~J. Moreira, and J.~Porter (2022).
\newblock Valid t-ratio inference for iv.
\newblock {\em American Economic Review\/}~{\em 112\/}(10), 3260--3290.

\bibitem[\protect\citeauthoryear{Lee and Okui}{Lee and Okui}{2012}]{lee2012hahn}
Lee, Y. and R.~Okui (2012).
\newblock {Hahn--Hausman test as a specification test}.
\newblock {\em Journal of Econometrics\/}~{\em 167\/}(1), 133--139.

\bibitem[\protect\citeauthoryear{Mikusheva and Sun}{Mikusheva and Sun}{2022}]{mikusheva2021inference}
Mikusheva, A. and L.~Sun (2022).
\newblock Inference with many weak instruments.
\newblock {\em The Review of Economic Studies\/}~{\em 89\/}(5), 2663--2686.

\bibitem[\protect\citeauthoryear{Nagar}{Nagar}{1959}]{nagar1959bias}
Nagar, A.~L. (1959).
\newblock The bias and moment matrix of the general k-class estimators of the parameters in simultaneous equations.
\newblock {\em Econometrica\/}~{\em 27\/}(4), 575--595.

\bibitem[\protect\citeauthoryear{Olea and Pflueger}{Olea and Pflueger}{2013}]{olea2013robust}
Olea, J. L.~M. and C.~Pflueger (2013).
\newblock A robust test for weak instruments.
\newblock {\em Journal of Business \& Economic Statistics\/}~{\em 31\/}(3), 358--369.

\bibitem[\protect\citeauthoryear{Skeels and Windmeijer}{Skeels and Windmeijer}{2018}]{skeels2018stock}
Skeels, C.~L. and F.~Windmeijer (2018).
\newblock On the stock--yogo tables.
\newblock {\em Econometrics\/}~{\em 6\/}(4), 44.

\bibitem[\protect\citeauthoryear{Stock and Yogo}{Stock and Yogo}{2005}]{stock2002testing}
Stock, J.~H. and M.~Yogo (2005).
\newblock Testing for weak instruments in linear iv regression.
\newblock {\em Identification and Inference for Econometric Models\/}, 80--108.

\bibitem[\protect\citeauthoryear{Wang and Zivot}{Wang and Zivot}{1998}]{wang1998inference}
Wang, J. and E.~Zivot (1998).
\newblock Inference on structural parameters in instrumental variables regression with weak instruments.
\newblock {\em Econometrica\/}~{\em 66\/}(9), 1389--1404.

\bibitem[\protect\citeauthoryear{Wang, Su, and Yao}{Wang et~al.}{2014}]{wang2014joint}
Wang, Q., Z.~Su, and J.~Yao (2014).
\newblock Joint clt for several random sesquilinear forms with applications to large-dimensional spiked population models.
\newblock {\em Electronic Journal of Probability\/}~{\em 19}, 1--28.

\bibitem[\protect\citeauthoryear{Wang and Kaffo}{Wang and Kaffo}{2016}]{wang2016bootstrap}
Wang, W. and M.~Kaffo (2016).
\newblock Bootstrap inference for instrumental variable models with many weak instruments.
\newblock {\em Journal of Econometrics\/}~{\em 192\/}(1), 231--268.

\end{thebibliography}

\section{Proofs of main results}\label{app}
We first prove Theorem \ref{theo:F}, then prove Corollary \ref{size_F} and Proposition  \ref{theo:seql} applying Theorem \ref{theo:F}. Finally, we prove Theorem  \ref{theo:wald}.

\subsection{Proof of Theorem \ref{theo:F}}
Denote $\boldsymbol{\Upsilon}=\Z\pi$, we first establish the following lemma:
\begin{lemma} \label{lem:F} Under Assumptions \ref{assum:1} and \ref{assum:2}, as $n\rightarrow \infty$,
$$\left(\frac{\mathbf{v}'\mathbf{v}-n\sigma_{vv}^2}{\sqrt{n}},  \frac{\mathbf{v}'\PP_Z\mathbf{v}-K_n\sigma_{vv}^2}{\sqrt{n}}\right)'\stackrel{d}{\rightarrow}N\left(\mathbf{0},(\boldsymbol{\Sigma}_{ij})_{i,j=1,2}\right),$$
with
$$\boldsymbol{\Sigma}_{11}=\mathrm{Var}(v_1^2), \; \boldsymbol{\Sigma}_{22}=\omega\mathrm{E}(v_1^4)+(2\alpha-3\omega)\sigma_{vv}^4,\; and\; \boldsymbol{\Sigma}_{12}=\alpha\mathrm{Var}(v_1^2).$$
\end{lemma}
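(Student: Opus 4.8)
The plan is to establish the bivariate CLT through the Cram\'er--Wold device, reducing the joint statement to a one-dimensional CLT for a single quadratic form. For fixed constants $a,b$, observe that
\[
a(\mathbf{v}'\mathbf{v}) + b(\mathbf{v}'\PP_Z\mathbf{v}) = \mathbf{v}'A_n\mathbf{v}, \qquad A_n = a\,\mathbf{I}_n + b\,\PP_Z,
\]
so it suffices to show that $\big(\mathbf{v}'A_n\mathbf{v} - \sigma_{vv}^2\operatorname{tr}(A_n)\big)/\sqrt{n}$ is asymptotically normal with the variance predicted by $a^2\Sigma_{11} + 2ab\,\Sigma_{12} + b^2\Sigma_{22}$. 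Since $\operatorname{tr}(A_n) = an + bK_n$, the centering $\sigma_{vv}^2\operatorname{tr}(A_n)$ matches exactly the centering $a\,n\sigma_{vv}^2 + b\,K_n\sigma_{vv}^2$ appearing in the lemma.

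The second step is the moment computation. Using the standard variance formula for a quadratic form in i.i.d.\ mean-zero entries with symmetric $A_n$,
\[
\operatorname{Var}(\mathbf{v}'A_n\mathbf{v}) = \big(\mathrm{E}(v_1^4) - 3\sigma_{vv}^4\big)\sum_i (A_n)_{ii}^2 + 2\sigma_{vv}^4\operatorname{tr}(A_n^2),
\]
together with the projection identities $\operatorname{tr}(\PP_Z) = \operatorname{tr}(\PP_Z^2) = K_n$ and $\sum_i (\PP_Z)_{ii} = K_n$, I would expand $\sum_i (A_n)_{ii}^2 = a^2 n + 2abK_n + b^2\sum_i P_{ii}^2$ and $\operatorname{tr}(A_n^2) = a^2 n + 2abK_n + b^2K_n$. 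Dividing by $n$ and passing to the limit via $K_n/n \to \alpha$ and $n^{-1}\sum_i P_{ii}^2 \to \omega$ produces the quadratic form $a^2\Sigma_{11} + 2ab\,\Sigma_{12} + b^2\Sigma_{22}$; a direct check confirms that the $a^2$, $2ab$, and $b^2$ coefficients reproduce $\Sigma_{11} = \operatorname{Var}(v_1^2)$, $\Sigma_{12} = \alpha\operatorname{Var}(v_1^2)$, and $\Sigma_{22} = \omega\,\mathrm{E}(v_1^4) + (2\alpha - 3\omega)\sigma_{vv}^4$.

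For the asymptotic normality itself, I would either invoke an existing CLT for quadratic forms in i.i.d.\ variables (of de Jong or Kelejian--Prucha type) or build the martingale-difference decomposition directly, writing $\mathbf{v}'A_n\mathbf{v} - \sigma_{vv}^2\operatorname{tr}(A_n) = \sum_k D_{n,k}$ with
\[
D_{n,k} = (A_n)_{kk}\big(v_k^2 - \sigma_{vv}^2\big) + 2\, v_k \sum_{j<k}(A_n)_{kj}\,v_j,
\]
a martingale difference array relative to the filtration $\mathcal{F}_k = \sigma(v_1,\dots,v_k)$. The martingale CLT then requires (a) that the normalized sum of conditional variances $n^{-1}\sum_k \mathrm{E}[D_{n,k}^2 \mid \mathcal{F}_{k-1}]$ converge in probability to the limit computed above, and (b) a conditional Lindeberg condition. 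Both are feasible because $A_n$ has bounded spectral norm $\|A_n\| \le |a| + |b|$ and bounded diagonal, while $\operatorname{Var}(\mathbf{v}'A_n\mathbf{v}) \asymp n$; hence a single summand's contribution is asymptotically negligible relative to the total variance, and Assumption \ref{assum:2} supplies the uniform integrability needed for Lindeberg.

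The main obstacle I anticipate is step (a): the conditional variance contains the random bilinear part $4\sigma_{vv}^2\sum_k \big(\sum_{j<k}(A_n)_{kj}v_j\big)^2$, and showing its normalized version stabilizes at the deterministic limit requires a law of large numbers for the off-diagonal contribution. This is precisely where the projection structure ($\operatorname{tr}(\PP_Z^2) = K_n = O(n)$, so $\sum_{j\neq k} P_{kj}^2$ is controlled) and the exogeneity of $\Z$ are essential: arguing conditionally on $\Z$, the deterministic limit $\omega$ renders the target variance nonrandom, and since the resulting Gaussian limit is free of the conditioning sequence, unconditional convergence follows. Everything else reduces to routine moment bookkeeping.
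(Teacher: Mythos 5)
Your proposal is correct, but it follows a genuinely different route from the paper. The paper's proof is a two-line citation: it applies Theorem 2 of Wang et al.\ (2014) — a ready-made joint CLT for sesquilinear forms $\vv'\mathbf{A}_n\vv$ and $\vv'\mathbf{B}_n\vv$ — with $\mathbf{A}_n=\mathbf{I}_n$ and $\mathbf{B}_n=\PP_Z$, and simply verifies that the limiting quantities ($\omega_1=\theta_1=\tau_1=1$, $\theta_2=\tau_2=\omega_3=\theta_3=\tau_3=\alpha$, $\omega_2=\omega$) plug into that theorem's covariance formulas to give $\boldsymbol{\Sigma}_{11},\boldsymbol{\Sigma}_{12},\boldsymbol{\Sigma}_{22}$. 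You instead build the result from scratch: Cram\'er--Wold reduces the bivariate statement to a single quadratic form $\vv'(a\mathbf{I}_n+b\PP_Z)\vv$, the classical variance formula $(\mathrm{E}(v_1^4)-3\sigma_{vv}^4)\sum_i(A_n)_{ii}^2+2\sigma_{vv}^4\operatorname{tr}(A_n^2)$ together with $\operatorname{tr}(\PP_Z)=\operatorname{tr}(\PP_Z^2)=K_n$ reproduces exactly the quadratic $a^2\boldsymbol{\Sigma}_{11}+2ab\boldsymbol{\Sigma}_{12}+b^2\boldsymbol{\Sigma}_{22}$ (I checked the three coefficients; they match), and asymptotic normality follows from the martingale-difference decomposition with the standard conditional-variance and Lindeberg verifications, both of which are controlled by the bounded spectral norm and bounded diagonal of $A_n$ under the finite fourth moment in Assumption 2. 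Your flagged obstacle — the law of large numbers for $\sum_k\bigl(\sum_{j<k}(A_n)_{kj}v_j\bigr)^2$ — is the genuine technical core, and it is exactly the content hidden inside the cited Wang et al.\ theorem; one small point you gloss over is the cross term in the conditional variance involving $\mathrm{E}(v_1^3)$, which has mean zero and vanishing variance but should be mentioned. The trade-off is the usual one: the paper's proof is short but outsources the hard analysis to an external joint CLT, while yours is self-contained, more elementary, and makes transparent where the projection structure and the limits $\alpha$ and $\omega$ enter.
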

\begin{proof}
   We apply Theorem 2 in \cite{wang2014joint} by setting $\mathbf{A}_n=\mathbf{I}_n$ and $\mathbf{B}_n=\mathbf{P}_Z$. We verify that their defined quantities $\omega_1=\theta_1=\tau_1=1$, $\theta_2=\tau_2=\omega_3=\theta_3=\tau_3=\alpha$ and $\omega_2=\omega$ exists since the leverage value $P_{ii}$ ranges in $[0,1]$. One can verify $A_1=\mathrm{Var}(v_1^2)$,  $A_2=A_3=\sigma_{vv}^4$ in Theorem 2 of \cite{wang2014joint}, which completes the proof.
\end{proof}

To prove Theorem \ref{theo:F}, firstly note that
\begin{equation*}
    F=\frac{n-K_n}{K_n} \frac{\frac{\boldsymbol{\Upsilon}'\boldsymbol{\Upsilon}}{n}+2\frac{\mathbf{v}'\boldsymbol{\Upsilon}}{n}+\frac{\mathbf{v}'\PP_Z\mathbf{v}}{n}}{\frac{\mathbf{v}'\mathbf{v}}{n}-\frac{\mathbf{v}'\PP_Z\mathbf{v}}{n}}.
\end{equation*}
Besides,
\begin{equation*}
    \frac{\mathbf{v}'\boldsymbol{\Upsilon}}{n}=o_p(\frac{1}{\sqrt{n}})
\end{equation*}
as
\begin{equation*}
    \mathrm{E}\left(\frac{\mathbf{v}'\boldsymbol{\Upsilon}}{n}\right)^2=\mathrm{E} \left(\frac{\boldsymbol{\Upsilon}'\mathbf{v}\mathbf{v}'\boldsymbol{\Upsilon}}{n^2}\right)= \sigma_{vv}^2\frac{\boldsymbol{\Upsilon}'\boldsymbol{\Upsilon}}{n^2}=o(\frac{1}{n}).
\end{equation*}
We then can apply Delta method with Lemma \ref{lem:F} and the function $f:\mathbb{R}^2\rightarrow\mathbb{R}$ satisfying that $$f\left(\frac{\mathbf{v}'\mathbf{v}}{n},  \frac{\mathbf{v}'\PP_Z\mathbf{v}}{n}\right)=F.$$
We have $$\nabla f\left(\sigma_{vv}^2,\frac{K_n}{n}\sigma_{vv}^2\right)=\left(-\frac{n\sum_{i=1}\Upsilon_i^2+nK_n\sigma_{vv}^2}{K_n(n-K_n)\sigma_{vv}^4},\frac{n\sum_{i=1}\Upsilon_i^2+n^2\sigma_{vv}^2}{K_n(n-K_n)\sigma_{vv}^4}\right),$$
which yields with Lemma \ref{lem:F} that
$$\sqrt{n}(F-1-\frac{\mu_n^2}{K_n })\stackrel{d}{\rightarrow}N\left(0,\frac{(\omega-\alpha^2)\mathrm{E}(v_1^4)+(2\alpha-3\omega+\alpha^2)\sigma_{vv}^4}{\alpha^2(1-\alpha)^2\sigma_{vv}^4}\right).$$

\subsection{Proof of Corollary \ref{size_F}}
By the noncentral Chi-squared distribution's normal approximation for large $K_n$, we verify that
\begin{equation*}
    q_{\tau}^{\chi^2_{K_n}(\mu_0^2)}=\frac{K_n+\mu_0^2}{2K_n+4\mu_0^2}\left(\Phi^{-1}(1-\tau)+\sqrt{2K_n+4\mu_0^2}\right)^2+O(1),
\end{equation*}
 so 
\begin{equation*}
    \frac{q_{\tau}^{\chi^2_{K_n}(\mu_0^2)}}{K_n}-1=\frac{1+\mu_0^2/{K_n}}{2K_n+4\mu_0^2}(\Phi^{-1}(1-\tau))^2+\frac{2+2\mu_0^2/K_n}{\sqrt{2K_n+4\mu_0^2}}\Phi^{-1}(1-\tau)+\frac{\mu_0^2}{K_n}+O(\frac{1}{K_n}).                       
\end{equation*}        
Denote $$\sigma_F^2=\frac{(\omega-\alpha^2)\mathrm{E}(v_1^4)+(2\alpha-3\omega+\alpha^2)\sigma_{vv}^4}{\alpha^2(1-\alpha)^2\sigma_{vv}^4}.$$Then, using Theorem \ref{theo:F}, 
\begin{align*}
    \mathbb{P}\left( K_nF > q_{\tau}^{\chi^2_{K_n}(\mu_0^2)}\right)&=\mathbb{P}\left( {\sigma}^{-1}_F\sqrt{n}(F-1-\frac{\mu_0^2}{K_n}) >{\sigma}^{-1}_F\sqrt{n}(\frac{q_{\tau}^{\chi^2_{K_n}(\mu_0^2)}} {K_n}-1-\frac{\mu_0^2}{K_n})\right)\nonumber\\&=\mathbb{P}\left({\sigma}^{-1}_F\sqrt{n}(F-1-\frac{\mu_0^2}{K_n})> {\sigma}^{-1}_F\sqrt{\frac{2}{\alpha}}\Phi^{-1}(1-\tau)+o(1)\right)\\&\rightarrow1-\Phi\left(\sqrt{\frac{2}{\alpha{\sigma}^{2}_F}}\Phi^{-1}(1-\tau)\right). \label{eq:si}
\end{align*}
Especially, when $\omega=\alpha^2$ or $v_i$ has zero excess kurtosis, then $\sigma_F^2=2/(\alpha(1-\alpha))$, so that 
\begin{align*}
    \mathbb{P}\left( K_nF > q_{\tau}^{\chi^2_{K_n}(\mu_0^2)}\right)\rightarrow1-\Phi\left(\sqrt{1-\alpha}\Phi^{-1}(1-\tau)\right)>\tau.
\end{align*}

\subsection{Proof of Proposition \ref{theo:seql}}
Similar to the proofs of Theorem \ref{size_F}, we have
\begin{align*}
    \mathbb{P}\left( K_nF > q_{\tau}^{\chi^2_{K_n}(\mu_0^2)}\right)&=\mathbb{P}\left( {{2}}^{-1/2}\sqrt{K_n}(F-1-\frac{\mu_0^2}{K_n}) >{{2}}^{-1/2}\sqrt{K_n}(\frac{q_{\tau}^{\chi^2_{K_n}(\mu_0^2)}} {K_n}-1-\frac{\mu_0^2}{K_n})\right)\nonumber\\&=\mathbb{P}\left({{2}}^{-1/2}\sqrt{K_n}(F-1-\frac{\mu_0^2}{K_n})> \Phi^{-1}(1-\tau)+o(1)\right)\\&\rightarrow1-\Phi\left(\Phi^{-1}(1-\tau)\right)=\tau, 
\end{align*}
where the last convergence holds by the CLT under the SEQ-L: $\sqrt{K_n}(F-1-\mu_0^2/K_n)\stackrel{d}{\rightarrow}N(0,2)$.

\subsection{Proof of Theorem \ref{theo:wald}}
Define
\begin{equation*}
    Q_{Yu}=\frac{\Y'\PP_b\uu}{\sqrt{K_n}}, \; Q_{YY}=\frac{\Y'\PP_b\Y}{\sqrt{K_n}}.
\end{equation*}
We first introduce two lemmas: 
\begin{lemma} \label{lem1:wald}Suppose that $\mu_n^2=o(n)$, under Assumption of the  Theorem \ref{theo:wald}, as $n \rightarrow \infty$,
  $$\left( Q_{Yu}, Q_{YY} - \frac{(1-\frac{K_n}{n})\pi'\Z'\Z\pi}{\sqrt{K_n}}\right)\stackrel{d}{\rightarrow} N(0,\Sigma),$$ where $\Sigma$ is the asymptotic covariance matrix, with  elements:
  \begin{equation*}\label{rho}
       \sigma_1^2=(1-\alpha)(\sigma_{vv}^2\sigma_{uu}^2+\sigma_{vu}^2), \; \sigma_2^2=2(1-\alpha)\sigma_{vv}^4, \; \sigma_{12}=2(1-\alpha)\sigma_{vv}^2\sigma_{vu}, \;  \rho=\frac{\sigma_{12}}{\sigma_{1}\sigma_{2}}.
  \end{equation*}
\end{lemma}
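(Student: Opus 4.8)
The plan is to substitute the first-stage equation $\Y=\boldsymbol{\Upsilon}+\vv$ with $\boldsymbol{\Upsilon}=\Z\pi$ into both $Q_{Yu}$ and $Q_{YY}$ and separate the deterministic signal from the stochastic fluctuation. Since $\PP_Z\boldsymbol{\Upsilon}=\boldsymbol{\Upsilon}$, we have $\PP_b\boldsymbol{\Upsilon}=(1-K_n/n)\boldsymbol{\Upsilon}$, and hence $\boldsymbol{\Upsilon}'\PP_b\boldsymbol{\Upsilon}=(1-K_n/n)\pi'\Z'\Z\pi$, so the centering subtracted in the statement is exactly $\boldsymbol{\Upsilon}'\PP_b\boldsymbol{\Upsilon}/\sqrt{K_n}$. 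Expanding the two forms gives
$$
Q_{Yu}=\frac{\boldsymbol{\Upsilon}'\PP_b\uu+\vv'\PP_b\uu}{\sqrt{K_n}},\qquad Q_{YY}-\frac{\bigl(1-\tfrac{K_n}{n}\bigr)\pi'\Z'\Z\pi}{\sqrt{K_n}}=\frac{2\boldsymbol{\Upsilon}'\PP_b\vv+\vv'\PP_b\vv}{\sqrt{K_n}}.
$$
I would first show the two cross terms are negligible: because $\PP_b\boldsymbol{\Upsilon}=(1-K_n/n)\boldsymbol{\Upsilon}$, both $\boldsymbol{\Upsilon}'\PP_b\uu$ and $\boldsymbol{\Upsilon}'\PP_b\vv$ are mean-zero linear forms with variance of order $\boldsymbol{\Upsilon}'\boldsymbol{\Upsilon}=\sigma_{vv}^2\mu_n^2$; dividing by $\sqrt{K_n}$ yields variance $O(\mu_n^2/K_n)$, which vanishes under $\mu_n^2=o(n)$ together with $K_n\asymp n$ from Assumption \ref{assum:1}. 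Thus $Q_{Yu}=\vv'\PP_b\uu/\sqrt{K_n}+o_p(1)$ and the centered $Q_{YY}$ equals $\vv'\PP_b\vv/\sqrt{K_n}+o_p(1)$, so the claim reduces to a joint CLT for the pure-error pair.

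The main step, and the principal obstacle, is the joint asymptotic normality of $(\vv'\PP_b\uu,\vv'\PP_b\vv)/\sqrt{K_n}$, a bilinear form in the correlated errors $(\uu,\vv)$ paired with a quadratic form in $\vv$. I would obtain it via the Cram\'er--Wold device: for arbitrary $a,b$ the combination $a\,\vv'\PP_b\uu+b\,\vv'\PP_b\vv$ can be written, using the symmetry of $\PP_b$, as a single quadratic form $\mathbf{s}'\mathbf{C}\mathbf{s}$ in the stacked $2n$-vector $\mathbf{s}=(\vv',\uu')'$, whose coordinate pairs $(v_i,u_i)$ are i.i.d., with $\mathbf{C}$ a block matrix built from $\PP_b$. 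Asymptotic normality of such a form follows from a martingale-difference CLT, or equivalently from the joint CLT for sesquilinear forms of \cite{wang2014joint} used for Lemma \ref{lem:F}. The Lindeberg/Lyapunov condition is met because $\PP_b=\PP_Z-(K_n/n)\mathbf{I}_n$ has eigenvalues $1-K_n/n$ and $-K_n/n$, so $\|\PP_b\|<1$ while $\mathrm{tr}(\PP_b^2)=K_n(1-K_n/n)\to\infty$; together with the finite fourth moment of Assumption \ref{assum:2}, this prevents any single term from dominating.

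Finally I would identify the limiting covariance $\Sigma$ from the standard variance formulas, using $\mathrm{tr}(\PP_b^2)/K_n=1-K_n/n\to 1-\alpha$. Writing $B=\PP_b$, the off-diagonal parts give $\mathrm{Var}(\vv'B\uu)\approx(\sigma_{vv}^2\sigma_{uu}^2+\sigma_{vu}^2)\sum_{i\neq j}B_{ij}^2$, $\mathrm{Var}(\vv'B\vv)\approx 2\sigma_{vv}^4\,\mathrm{tr}(B^2)$, and $\mathrm{Cov}(\vv'B\uu,\vv'B\vv)\approx 2\sigma_{vv}^2\sigma_{vu}\sum_{i\neq j}B_{ij}^2$, each divided by $K_n$. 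The diagonal corrections carry the third and fourth moments through $\sum_i B_{ii}^2=\sum_i(P_{ii}-K_n/n)^2$, and here the two alternative hypotheses enter: under (i) the balanced-design condition $n^{-1}\sum_i(P_{ii}-\alpha)^2\to0$ forces $\sum_i B_{ii}^2=o(K_n)$, so every diagonal term drops out; under (ii) joint normality makes the fourth cumulant of $v_1$ vanish and makes $\mathrm{Var}(v_1u_1)=\sigma_{vv}^2\sigma_{uu}^2+\sigma_{vu}^2$ coincide with the off-diagonal coefficient, so the diagonal and off-diagonal parts recombine into $\mathrm{tr}(B^2)$. Either route yields $\sigma_1^2=(1-\alpha)(\sigma_{vv}^2\sigma_{uu}^2+\sigma_{vu}^2)$, $\sigma_2^2=2(1-\alpha)\sigma_{vv}^4$ and $\sigma_{12}=2(1-\alpha)\sigma_{vv}^2\sigma_{vu}$, which completes the proof.
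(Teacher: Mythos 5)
Your proposal is correct and follows essentially the same route as the paper: the paper's own (much terser) proof likewise reduces the claim to the joint CLT for $\bigl(\vv'\PP_b\uu/\sqrt{K_n},\,\vv'\PP_b\vv/\sqrt{K_n}\bigr)$ via Theorem 2 of \cite{wang2014joint} and disposes of the signal cross terms by invoking $\mu_n^2=o(n)$. Your write-up simply makes explicit the steps the paper leaves implicit — the identity $\PP_b\boldsymbol{\Upsilon}=(1-K_n/n)\boldsymbol{\Upsilon}$, the variance bound on the cross terms, and the role of assumptions (i) and (ii) in killing the diagonal corrections so that the covariance reduces to $(1-\alpha)$ times the stated constants.
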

\begin{proof}
    From Theorem 2 in \cite{wang2014joint}, we obtain that
     $$\left( \frac{\vv'\PP_b\uu}{\sqrt{K_n}}, \frac{\vv'\PP_b\vv}{\sqrt{K_n}}\right)\stackrel{d}{\rightarrow} N(0,\Sigma).$$
     The proof is then completed by noticing that $\mu_n^2=o(n)$.
\end{proof}

\begin{lemma}\label{lem2:wald}Suppose that $\mu_n^2=o(n)$, under the assumptions of  Theorem \ref{theo:wald},
\begin{itemize}
    \item[(i).] $\frac{\Y'\mathbf{M}_Z\Y}{n}=(1-\alpha)\sigma_{vv}^2+o_p(1)$,
     \item[(ii).] $\frac{\hat{\uu}'\mathbf{M}_Z\hat{\uu}}{n}=(1-\alpha)\sigma_{uu}^2-2(1-\alpha)\sigma_{vu}\frac{Q_{Yu}}{Q_{YY}}+(1-\alpha)\sigma_{vv}^2\frac{Q_{Yu}^2}{Q_{YY}^2}+o_p(1)$,
     \item[(iii).] $\frac{\Y'\mathbf{M}_Z\hat{\uu}}{n}=(1-\alpha)\sigma_{vu}-(1-\alpha)\sigma_{vv}^2\frac{Q_{Yu}}{Q_{YY}}+o_p(1)$.
\end{itemize}
\end{lemma}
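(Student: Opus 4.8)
The plan is to reduce all three expansions to probability limits of quadratic and bilinear forms in $\uu$ and $\vv$ by exploiting two exact algebraic identities. First, writing $\Y=\boldsymbol{\Upsilon}+\vv$ with $\boldsymbol{\Upsilon}=\Z\pi$, the fact that $\mathbf{M}_Z\boldsymbol{\Upsilon}=\mathbf{M}_Z\Z\pi=\mathbf{0}$ gives $\mathbf{M}_Z\Y=\mathbf{M}_Z\vv$ exactly, so every $\mathbf{M}_Z$-form involving $\Y$ loses its $\boldsymbol{\Upsilon}$ component. Second, since $\hat{\bbeta}_{B2SLS}=\beta+\Y'\PP_b\uu/(\Y'\PP_b\Y)$, I have the exact identity $\hat{\uu}=\y-\Y\hat{\bbeta}_{B2SLS}=\uu-\delta\,\Y$ with $\delta:=\hat{\bbeta}_{B2SLS}-\beta=Q_{Yu}/Q_{YY}$. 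Substituting these two identities expands each target into a fixed polynomial in the forms $\vv'\mathbf{M}_Z\vv$, $\uu'\mathbf{M}_Z\uu$, $\vv'\mathbf{M}_Z\uu$ with scalar coefficients $1,\delta,\delta^2$.

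Next I would establish the six building-block limits. By the law of large numbers, $\vv'\vv/n\to\sigma_{vv}^2$, $\uu'\uu/n\to\sigma_{uu}^2$ and $\vv'\uu/n\to\sigma_{vu}$. For the projected forms I would use $\mathrm{tr}(\PP_Z)=K_n$ to get $\mathrm{E}(\vv'\PP_Z\vv)=K_n\sigma_{vv}^2$, $\mathrm{E}(\uu'\PP_Z\uu)=K_n\sigma_{uu}^2$ and $\mathrm{E}(\vv'\PP_Z\uu)=\sigma_{vu}\sum_iP_{ii}=K_n\sigma_{vu}$ (off-diagonal terms vanishing by independence across $i$), together with variances of order $n$; dividing by $n$ and using $K_n/n\to\alpha$ yields $\vv'\PP_Z\vv/n\to\alpha\sigma_{vv}^2$, $\uu'\PP_Z\uu/n\to\alpha\sigma_{uu}^2$ and $\vv'\PP_Z\uu/n\to\alpha\sigma_{vu}$. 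The $\vv$-case is already in Lemma \ref{lem:F}; the $\uu$- and cross-cases follow identically, or directly from Theorem 2 of \cite{wang2014joint} as used for Lemma \ref{lem1:wald}. Combining, $\Y'\mathbf{M}_Z\Y/n=\vv'\mathbf{M}_Z\vv/n\to(1-\alpha)\sigma_{vv}^2$, $\uu'\mathbf{M}_Z\uu/n\to(1-\alpha)\sigma_{uu}^2$, and $\Y'\mathbf{M}_Z\uu/n=\vv'\mathbf{M}_Z\uu/n\to(1-\alpha)\sigma_{vu}$.

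The three claims then fall out by assembly. Part (i) is immediate. For (iii), $\Y'\mathbf{M}_Z\hat{\uu}/n=\vv'\mathbf{M}_Z\uu/n-\delta\,\Y'\mathbf{M}_Z\Y/n$ gives $(1-\alpha)\sigma_{vu}-(1-\alpha)\sigma_{vv}^2\,Q_{Yu}/Q_{YY}+o_p(1)$. For (ii), expanding $\hat{\uu}'\mathbf{M}_Z\hat{\uu}/n=\uu'\mathbf{M}_Z\uu/n-2\delta\,\Y'\mathbf{M}_Z\uu/n+\delta^2\,\Y'\mathbf{M}_Z\Y/n$ and substituting the three limits produces the stated three-term expression with $\delta=Q_{Yu}/Q_{YY}$.

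The step that needs care — and where I expect the main obstacle — is controlling the remainders after they are multiplied by $\delta$ and $\delta^2$: the identities produce errors of the form $\delta\cdot o_p(1)$ and $\delta^2\cdot o_p(1)$, so I must show $\delta=O_p(1)$. This I would obtain from Lemma \ref{lem1:wald}: under $\mu_n^2=o(n)$ (which, with $\mu_n^2=o(K_n)$, makes $\boldsymbol{\Upsilon}'\PP_b^2\boldsymbol{\Upsilon}/K_n$ of order $\mu_n^2/K_n\to0$, so the signal $\boldsymbol{\Upsilon}$ contributes only an $o_p(1)$ term to $Q_{Yu}$), the pair $(Q_{Yu},Q_{YY})$ is jointly tight after centering $Q_{YY}$ by its mean, and by the continuous mapping theorem $\delta=Q_{Yu}/Q_{YY}$ converges to the ratio $\xi/\nu$ of Theorem \ref{theo:wald}, a proper random variable since $\mathrm{P}(\nu=0)=0$; hence $\delta=O_p(1)$ and every $\delta\cdot o_p(1)$, $\delta^2\cdot o_p(1)$ remainder is genuinely $o_p(1)$. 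The only other nontrivial input is the concentration of the cross bilinear form $\vv'\PP_Z\uu$, the one piece not literally stated in Lemma \ref{lem:F}, which I would secure by the variance bound above or by the same joint-CLT machinery of \cite{wang2014joint}.
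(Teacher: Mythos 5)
Your proposal is correct and follows essentially the same route as the paper: the exact identities $\mathbf{M}_Z\Y=\mathbf{M}_Z\vv$ and $\hat{\uu}=\uu-(Q_{Yu}/Q_{YY})\Y$, the resulting polynomial expansion in $\uu'\mathbf{M}_Z\uu$, $\vv'\mathbf{M}_Z\uu$, $\vv'\mathbf{M}_Z\vv$, and the standard quadratic-form limits $\uu'\mathbf{M}_Z\uu/n\stackrel{p}{\rightarrow}(1-\alpha)\sigma_{uu}^2$, $\vv'\mathbf{M}_Z\uu/n\stackrel{p}{\rightarrow}(1-\alpha)\sigma_{vu}$ are exactly what the paper uses. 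Your explicit verification that $\delta=Q_{Yu}/Q_{YY}=O_p(1)$ (via Lemma \ref{lem1:wald}, the bound $\boldsymbol{\Upsilon}'\PP_b^2\boldsymbol{\Upsilon}/K_n=O(\mu_n^2/K_n)\rightarrow 0$, and $\mathrm{P}(\nu=0)=0$), so that the $\delta\cdot o_p(1)$ and $\delta^2\cdot o_p(1)$ remainders are genuinely negligible, is a point the paper's proof leaves implicit, but it is a refinement of the same argument rather than a different one.
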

\begin{proof}
    Proof of (i) can be proved using standard arguments for quadratic forms by noting that $\Y'\mathbf{M}_Z\Y=\vv'\mathbf{M}_Z\vv$. 

    To prove (ii), note that
    \begin{align*}
        \frac{\hat{\uu}'\mathbf{M}_Z\hat{\uu}}{n}=\frac{\uu'\mathbf{M}\uu}{n}-2\frac{\vv'\mathbf{M}\uu}{n}\frac{Q_{Yu}}{Q_{YY}}+\frac{\Y'\mathbf{M}_Z\Y}{n}\frac{Q_{Yu}^2}{Q_{YY}^2}.
    \end{align*}
    the proof is completed noticing that ${\uu}'\mathbf{M}_Z{\uu}/n\stackrel{p}{\rightarrow}(1-\alpha)\sigma_{uu}^2$ and $\vv\mathbf{M}_Z\uu/n\stackrel{p}{\rightarrow}(1-\alpha)\sigma_{vu}$. The proof of (iii) is similar so we omit it here.
\end{proof}
The B2SLS-Wald test statistic can be rewritten to:
\begin{align*}
    W&=\frac{Q_{Yu}^2}{Q_{YY}^2n^{-1}\hat{V}}
\end{align*}
where the denominator expands to
\begin{align*}
    \frac{n-K_n}{K_n}\frac{\Y'\PP_b\Y}{K_n}\hat{\sigma}_{uu}^2+\frac{n}{n-K_n}\left(\frac{\hat{\uu}'\mathbf{M}_Z\hat{\uu}}{n} \frac{\Y'\mathbf{M}_Z\Y}{n}+\left(  \frac{\Y'\mathbf{M}_Z\hat{\uu}}{n}\right)^2\right).
\end{align*}
From Lemma \ref{lem2:wald}, this expansion further converges to $\sigma_1^2-2\sigma_{12}{Q_{Yu}}/{Q_{YY}}+\sigma_2^2Q_{Yu}^2/Q_{YY}^2$, so that
\begin{align*}
    W=\frac{Q_{Yu}^2}{\sigma_1^2-2\sigma_{12}\frac{Q_{Yu}}{Q_{YY}}+\sigma_2^2\frac{Q_{Yu}^2}{Q_{YY}^2}}(1+o_p(1))=\frac{\xi^2}{1-2\rho\frac{\xi}{\nu}+\frac{\xi^2}{\nu^2}}(1+o_p(1)).
\end{align*}

\appendix
\newpage
\begin{center}
    \Large{\textbf{Supplementary Materials \protect \\ for}}  \\ \large{\textbf{``The First-stage F Test with Many Weak Instruments''}} 
\end{center}

\numberwithin{equation}{section}

\section{Concentration parameter, $F_c$ and $\widetilde{F}$}
MS2022 defined the jackknifed concentration parameter as follows:
\begin{equation*} 
    \Tilde{\mu}_n^2=\Gamma^{-1/2}\pi'\Z'(\PP_Z-\mathbf{D}_Z)\Z\pi, 
\end{equation*}
where $\Gamma$ is an unknown variance term and proposed the $\widetilde{F}$ test statistic to measure it. Specifically, MS2022 derived that  the asymptotic distribution of the JIVE-Wald test statistic is 
\begin{equation} \label{wald:jive}
    W_{JIVE} \stackrel{d}{\rightarrow} \frac{\xi_0^2}{1-2\rho_0\frac{\xi_0}{\nu_0}+\frac{\xi_0^2}{\nu_0^2}},
\end{equation}
where $\xi_0$ and $\nu_0$ are two normal random variables with means 0 and $\frac{\Tilde{\mu}_n^2}{\sqrt{K_n}}$, unit variances and a correlation coefficient $\rho_0$. They proposed a two-step procedure as following:
\begin{itemize}
    \item[1.] Obtain $C_0$ by controlling the worst asymptotic rejection rate of the JIVE-Wald test at a tolerance level $T$:
    \begin{equation*}
        \max_{\rho_0\in[-1,1]}\mathrm{P}\left( \frac{\xi^2}{1-2\rho_0\frac{\xi_0}{\nu_0}+\frac{\xi_0^2}{\nu_0^2}} \geq q_{\tau}^{\chi_1^2} \right) < T.
    \end{equation*}
    \item[2.] Use the proposed $\widetilde{F}$ statistic, which asymptotically follows $N({\Tilde{\mu}_n^2}/{\sqrt{K_n}},1)$, to make inference.
\end{itemize}

As the maximum rejection rate to the Wald test occurs at $\rho_0=1$, it follows that the behaviour of B2SLS-Wald and JIVE-Wald share the same pattern apart from the mean of the second normal random variable, i.e., the measure of the instrument strength. We observe that when controlling the worst rejection rate at a given level, the obtained $C_0$ is equivalent to the desired $\sqrt{\frac{1-\alpha}{2}}\frac{\mu_n^2}{\sqrt{K_n}}$. Consequently, our proposed first-step can be implemented without simulation studies based on the relationship between $C$ and $C_0$:
$C=\sqrt{\frac{2}{1-K_n/n}}C_0$.

\section{Collapse of SY2005's first step} 
Stock-Yogo's procedure relies on the established relationship between the relative bias and the concentration parameter. In this section, we show that this relationship  no longer holds when the number of instruments is large. Let us consider the following linear IV model with multiple endogenous variables:
\begin{equation} \label{model:p1}
    \underset{ n\times 1}{\y}=\underset{ n\times p}{\Y}\;\underset{ p\times 1}{\bbeta}+\underset{ n\times 1}{\uu},
\end{equation}
\begin{equation} \label{model:p2}
    \underset{ n\times p}{\Y}=\underset{ n\times K_n}{\Z}\underset{ K_n\times p}\PPi+\underset{ n\times p}{\V},
\end{equation}
where we still use $\Y$ for simplicity to denote the observations on the endogenous variables, and the $K_n \times p$ matrix $\PPi$ is the first-stage coefficients. We assume the $(u_i,\V_i)'$ are i.i.d. multivariate distributed with zero mean and $\mathrm{cov}(\V_i)=\boldsymbol{\Sigma}_{VV}$, $\mathrm{var}(u_i)=\sigma_u^2$ and $\mathrm{cov}(u_i,\V_i)=\boldsymbol{\Sigma}_{uV}$. 
As a measure of the relative bias of 2SLS, SY2005 proposed the ratio
\begin{equation}
    B_n^2=\frac{\mathrm{E}(\hat{\beta}^{2SLS}-\beta)'{\Sigma}_{YY}\mathrm{E}(\hat{\beta}^{2SLS}-\beta)}{\mathrm{E}(\hat{\beta}^{OLS}-\beta)'{\Sigma}_{YY}\mathrm{E}(\hat{\beta}^{OLS}-\beta)},
\end{equation}
where $\Sigma_{YY}= {\mathrm{plim}_{n\rightarrow \infty}}(\Y'\Y/n)$. They showed that 
\begin{equation} \label{SY_B}
    B_n\rightarrow B_{SY}= \left|\mathrm{E}\left[ \frac{(\lambda+\xi_s)'\xi_s}{(\lambda+\xi_s)'(\lambda+\xi_s)} \right] \right|,
\end{equation}
where $\lambda'\lambda=\underset{ n}{\lim}\mu_n^2$ and $\xi_s\sim N_{K}(\mathbf{0},\mathbf{I}_{K})$. It follows that the concentration parameter can fully characterize the asymptotic relative bias.\footnote{\cite{skeels2018stock} showed that $B_{SY}$ is a strictly decreasing continuous function of $\mu_0^2$.}  The instruments are deemed to be weak if  the  asymptotic relative bias is larger than a predetermined tolerance level.

To study the behavior of $B_n^2$ with many instruments, we impose the following assumption on  the 2SLS and OLS estimators.
\begin{assumption} \label{assum:ui}
    Both $\hat{\bbeta}^{2SLS}$ and $\hat{\bbeta}^{OLS}$ are uniformly integrable.
\end{assumption}
This assumption is of a high level, designed to ensure that the convergence in probability of $\hat{\bbeta}^{2SLS}-\hat{\bbeta}^{OLS}$ is indicative of its convergence in mean. A study conducted by \cite{afendras2016uniform} proved that under regularity conditions on $\Y$, $\hat{\bbeta}^{OLS}$ is uniformly integrable. We posit that $\hat{\bbeta}^{2SLS}$ is also uniformly integrable, given regularity conditions on $\Z$, and defer the proof to future research.
The behavior of the 2SLS relative bias is characterized in the following theorem.
\begin{theorem} \label{theo:bias} 
Under Assumptions \ref{assum:1}, \ref{assum:2} and \ref{assum:ui}, suppose that $\PPi'\Z'\Z\PPi/n\rightarrow \boldsymbol{\Theta} $ almost surely, as $n\rightarrow\infty$, we have almost surely,
\begin{equation} \label{eq:b}
     B_n^2 \rightarrow B^2=\frac{\alpha^2\boldsymbol{\Sigma}_{uV} \boldsymbol{\Theta}_1^{-1} \boldsymbol{\Theta}_2 \boldsymbol{\Theta}_1^{-1}\boldsymbol{\Sigma}_{Vu}}{\boldsymbol{\Sigma}_{uV} \boldsymbol{\Theta}_2^{-1} \boldsymbol{\Sigma}_{Vu}},
\end{equation}
 where $\boldsymbol{\Sigma}_{uV} =\boldsymbol{\Sigma}_{Vu}'$,  $\boldsymbol{\Theta}_1=\boldsymbol{\Theta}+\alpha\boldsymbol{\Sigma}_{VV}$ and $\boldsymbol{\Theta}_2=\boldsymbol{\Theta}+\boldsymbol{\Sigma}_{VV}$. In particular, if $\boldsymbol{\Theta}=0$, $ B^2=1\label{relative:null}$, otherwise, $0<B^2<1$.
\end{theorem}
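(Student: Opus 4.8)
The plan is to derive the limit of $B_n^2$ from the probability limits of the two estimator biases. Writing both estimators in error form, $\hat{\bbeta}^{OLS}-\bbeta=(\Y'\Y/n)^{-1}(\Y'\uu/n)$ and $\hat{\bbeta}^{2SLS}-\bbeta=(\Y'\PP_Z\Y/n)^{-1}(\Y'\PP_Z\uu/n)$, I would first identify $\mathrm{plim}(\hat{\bbeta}^{OLS}-\bbeta)$ and $\mathrm{plim}(\hat{\bbeta}^{2SLS}-\bbeta)$, then invoke Assumption \ref{assum:ui} to upgrade these to limits of the \emph{expected} biases, and finally substitute into the definition of $B_n^2$ together with $\Sigma_{YY}=\mathrm{plim}(\Y'\Y/n)$ and apply continuous mapping.

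The heart of the proof is locating the limits of the sample moments obtained after substituting $\Y=\Z\PPi+\V$ and expanding each quadratic and bilinear form into a signal part, cross terms, and a noise part. For the OLS pieces the law of large numbers gives $\V'\V/n\to\boldsymbol{\Sigma}_{VV}$ and $\V'\uu/n\to\boldsymbol{\Sigma}_{Vu}$, while the cross terms $\PPi'\Z'\V/n$ and $\PPi'\Z'\uu/n$ vanish by the same mean-zero, $O(1/n)$-variance bound used for $\mathbf{v}'\boldsymbol{\Upsilon}/n$ in the proof of Theorem \ref{theo:F} (here $\PPi'\Z'\Z\PPi/n^2\to0$). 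Together with the hypothesis $\PPi'\Z'\Z\PPi/n\to\boldsymbol{\Theta}$ this gives $\Sigma_{YY}=\boldsymbol{\Theta}+\boldsymbol{\Sigma}_{VV}=\boldsymbol{\Theta}_2$ and $\mathrm{plim}(\hat{\bbeta}^{OLS}-\bbeta)=\boldsymbol{\Theta}_2^{-1}\boldsymbol{\Sigma}_{Vu}$. The decisive many-instrument effect enters through the $\PP_Z$-quadratic forms: since $\mathrm{tr}\,\PP_Z=K_n$, each of $\V'\PP_Z\V$ and $\V'\PP_Z\uu$ has expectation $K_n$ times the corresponding covariance, so under Assumptions \ref{assum:1}--\ref{assum:2} and the finite fourth moment (vanishing variance, as in Lemma \ref{lem:F}) one gets $\V'\PP_Z\V/n\to\alpha\boldsymbol{\Sigma}_{VV}$ and $\V'\PP_Z\uu/n\to\alpha\boldsymbol{\Sigma}_{Vu}$. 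Using $\PP_Z\Z=\Z$ to preserve the signal term then yields $\Y'\PP_Z\Y/n\to\boldsymbol{\Theta}+\alpha\boldsymbol{\Sigma}_{VV}=\boldsymbol{\Theta}_1$ and $\Y'\PP_Z\uu/n\to\alpha\boldsymbol{\Sigma}_{Vu}$, hence $\mathrm{plim}(\hat{\bbeta}^{2SLS}-\bbeta)=\alpha\boldsymbol{\Theta}_1^{-1}\boldsymbol{\Sigma}_{Vu}$. The extra factor $\alpha$ and the replacement of $\boldsymbol{\Theta}_2$ by $\boldsymbol{\Theta}_1$ relative to OLS are exactly what break the classical concentration-parameter characterization. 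Assumption \ref{assum:ui} then converts the two convergences in probability into convergence of the expected biases, and inserting these together with $\Sigma_{YY}=\boldsymbol{\Theta}_2$ into $B_n^2$ gives the claimed ratio, the denominator collapsing via $\boldsymbol{\Sigma}_{uV}\boldsymbol{\Theta}_2^{-1}\boldsymbol{\Theta}_2\boldsymbol{\Theta}_2^{-1}\boldsymbol{\Sigma}_{Vu}=\boldsymbol{\Sigma}_{uV}\boldsymbol{\Theta}_2^{-1}\boldsymbol{\Sigma}_{Vu}$.

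The step I expect to require the most care is the range $0<B^2<1$. The case $\boldsymbol{\Theta}=0$ is immediate by substitution, since $\boldsymbol{\Theta}_1=\alpha\boldsymbol{\Sigma}_{VV}$ and $\boldsymbol{\Theta}_2=\boldsymbol{\Sigma}_{VV}$ force the $\alpha$'s to cancel and give $B^2=1$. For $\boldsymbol{\Theta}\neq0$, positivity is clear, and the upper bound reduces, with $a=\boldsymbol{\Sigma}_{Vu}$, to the Loewner inequality $\alpha^2\boldsymbol{\Theta}_1^{-1}\boldsymbol{\Theta}_2\boldsymbol{\Theta}_1^{-1}\prec\boldsymbol{\Theta}_2^{-1}$; conjugating by $\boldsymbol{\Theta}_2^{1/2}$ turns this into $\alpha^2 M^2\prec\mathbf{I}$ with $M=\boldsymbol{\Theta}_2^{1/2}\boldsymbol{\Theta}_1^{-1}\boldsymbol{\Theta}_2^{1/2}$, i.e.\ into $\lambda_{\max}(M)=\lambda_{\max}(\boldsymbol{\Theta}_1^{-1}\boldsymbol{\Theta}_2)<1/\alpha$. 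Writing $\boldsymbol{\Theta}_2=\boldsymbol{\Theta}_1+(1-\alpha)\boldsymbol{\Sigma}_{VV}$ and using the generalized Rayleigh quotient, $\lambda_{\max}(\boldsymbol{\Theta}_1^{-1}\boldsymbol{\Theta}_2)=1+(1-\alpha)\max_{x\neq0}\frac{x'\boldsymbol{\Sigma}_{VV}x}{x'\boldsymbol{\Theta}x+\alpha x'\boldsymbol{\Sigma}_{VV}x}$, and the inner maximum is strictly below $1/\alpha$ exactly when $x'\boldsymbol{\Theta}x>0$ for all $x\neq0$; in the leading scalar case $p=1$ this is simply $\boldsymbol{\Theta}\neq0$. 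I would note that for a singular but nonzero $\boldsymbol{\Theta}$ the strict bound still holds for the specific direction $a=\boldsymbol{\Sigma}_{Vu}$ unless $\boldsymbol{\Theta}_2^{-1/2}a$ lands in the top eigenspace of $M$, so if full generality in $p$ is wanted one would state the strict inequality under $\boldsymbol{\Theta}\succ0$.
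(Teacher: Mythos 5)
Your proposal is correct and follows the same overall skeleton as the paper (probability limits of the two biases, upgrade to expectations via the uniform-integrability assumption, substitute into $B_n^2$ with $\Sigma_{YY}=\boldsymbol{\Theta}_2$, then an eigenvalue bound for $B^2<1$), but it differs in the execution of both main steps. For the probability limits, the paper simply cites Theorem 1 of \cite{huang2022specification}, whereas you re-derive them by expanding $\Y=\Z\PPi+\V$ and computing $\V'\PP_Z\V/n\to\alpha\boldsymbol{\Sigma}_{VV}$ and $\V'\PP_Z\uu/n\to\alpha\boldsymbol{\Sigma}_{Vu}$ from $\mathrm{tr}\,\PP_Z=K_n$; this makes the argument self-contained and correctly isolates where the factor $\alpha$ enters. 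For the upper bound, the paper applies the Woodbury identity to $\alpha\boldsymbol{\Theta}_2^{1/2}\boldsymbol{\Theta}_1^{-1}\boldsymbol{\Theta}_2^{1/2}$ and then Weyl's inequality, while you reduce to $\lambda_{\max}(\boldsymbol{\Theta}_1^{-1}\boldsymbol{\Theta}_2)<1/\alpha$ and evaluate it by a generalized Rayleigh quotient; the two are equivalent, but yours is more elementary and makes transparent that the strict inequality is exactly the condition $x'\boldsymbol{\Theta}x>0$ for all $x\neq0$. You are also right to flag that strictness requires $\boldsymbol{\Theta}\succ0$ (or at least that $\boldsymbol{\Theta}_2^{-1/2}\boldsymbol{\Sigma}_{Vu}$ avoids the degenerate eigenspace); the paper's proof has the same implicit requirement, since its Woodbury step uses $\boldsymbol{\Theta}^{-1}$ and asserts positive definiteness of $(1-\alpha)\alpha^{-1}\boldsymbol{\Theta}$, so your explicit caveat is a point in your favor rather than a gap.
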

Theorem \ref{theo:bias} shows that the relative bias will converge to a positive constant when the number of instrument is large. The limit is no more than one and it equals one only when the concentration parameter has the order smaller than $n$. Figure \ref{fig:bias}(a) shows the asymptotic relative bias $B_{SY}$ in (\ref{SY_B}) as a function of $\mu_0^2$ when $p=1$ and $K=5$. When the number of instruments is large, Figure \ref{fig:bias}(b) presents the asymptotic relative bias in (\ref{eq:b}) as a function of $\mu_0^2$, where the x-axis plots the order of $\mu_0^2$.  The relationship between $B_{SY}$ and $\mu_0^2$ given in (\ref{SY_B}) for a small number of instruments no longer holds for the case of many instruments.  Furthermore, as long as $\PPi'\Z'\Z\PPi=o(n)$, the relative bias will converge to one. This finding indicates that testing for weak instruments based on the relative bias becomes conceptually unimplementable with a large number of instruments as 2SLS will always have the same level of bias as OLS does. 

\begin{figure}[h!]
\centering
\subfloat[$B_{SY}$ as a function of $\mu^2$ given in (\ref{SY_B})]{\includegraphics[width=8.1cm]{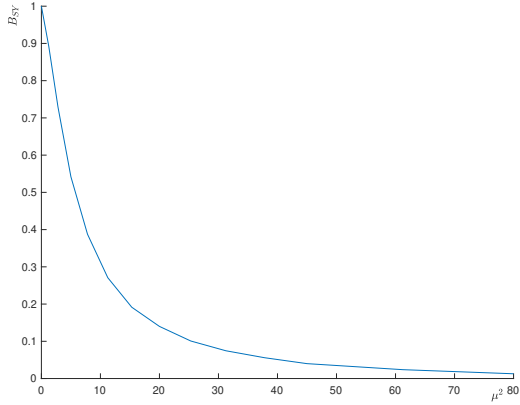}}\hfil
\subfloat[$B^2$ as a function of $\mu^2$ given in (\ref{eq:b})]{\includegraphics[width=8.1cm]{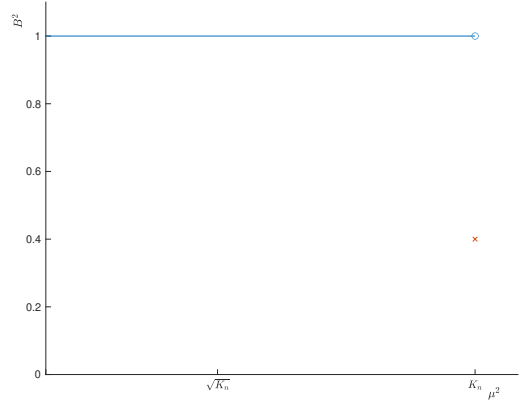}}\hfil
\caption{The plotted relative bias as a function of the concentration parameter for the case of small number of instruments ($K_n=5$) and large number of instruments ($K_n=500$), respectively. The sample size is $n=1000$.} \label{fig:bias}
\end{figure}

\section{Multiple endogenous variables} \label{sec:cd}
In this section, we first define the weak instruments in the context of multiple endogenous variables. Next, we derive the limiting distribution of the trace of the Cragg-Donald statistic in the linear IV model with multiple endogenous variables, as defined in Equation (\ref{model:p1}) and (\ref{model:p2}), under the many instruments setup. Building on this result, we propose a explicit procedure for assessing the strength of many instruments when there are two endogenous variables.

MS2022 defines weak instruments with a single endogenous variable by the fact that no consistent estimators and tests exist when $\mu_n^2/\sqrt{K_n}$ is bounded. We adopt the same idea and characterize the strength of instruments with multiple endogenous variables as follows:
\begin{equation*}
    \theta_n^2={\mathrm{tr}(\boldsymbol{\Sigma}_{VV}^{-1/2}\PPi'\Z'\Z\PPi\boldsymbol{\Sigma}_{VV}^{-1/2})},
\end{equation*}
which reduces to $\mu_n^2$ in the case of $p=1$.  We define weak instruments in the context of multiple endogenous variables when $\theta_n^2/\sqrt{K_n}$ is bounded on the grounds that consistent estimation is not achievable in this scenario \citep{chao2005consistent}.

We now consider testing for many weak instruments. For the case of a fixed number of instruments, Stock-Yogo proposed to use the minimum eigenvalue of the following Cragg-Donald statistic, which is the matrix analog of the first-stage $F$-statistic: 
\begin{equation} \label{cd}
    \mathbf{CD}=\frac{n-K_n}{K_n}(\Y'\mathbf{M}_{Z}\Y)^{-1}\Y'\PP_{Z}\Y .
\end{equation}
The obtained test is documented to be conservative by applying a Chi-squared bound to the noncentral Wishart distribution. Stock-Yogo also found that the behavior of the test procedure depends on all  eigenvalues of the Cragg-Donald statistic through the relative bias when the number of instruments increases. That is, when $p\geq 1$, the strength of instruments is contained in all eigenvalues of the concentration matrix $\boldsymbol{\Sigma}_{VV}^{-1/2}\PPi'\Z'\Z\PPi\boldsymbol{\Sigma}_{VV}^{-1/2}$.
This finding motivates us to investigate the asymptotic behavior of the trace of (\ref{cd}) in case of many instruments.

\begin{theorem} \label{theo:cd}  When $\theta_n^2/K_n\rightarrow0$, under Assumptions \ref{assum:1} and \ref{assum:2}, as $n\rightarrow\infty$, we have
\begin{equation*}
    \sqrt{n}\left(\mathrm{tr}(\mathbf{CD})-p-\frac{\theta_n^2}{K_n}\right) \stackrel{d}{\rightarrow} N(0,\sigma_0^2),
\end{equation*}
where $\sigma_0^2=(1-\alpha)^{-2}\mathrm{vec}'(\boldsymbol{\Sigma}_{VV}^{-1})\left(\alpha^{-2}\boldsymbol{\Sigma}_4-\boldsymbol{\Sigma}_3\right)\mathrm{vec}(\boldsymbol{\Sigma}_{VV}^{-1})$ with matrices $\boldsymbol{\Sigma}_1$ and $\boldsymbol{\Sigma}_2$ defined in (\ref{eq:sigma11}) and (\ref{eq:sigma22}) in Appendix, respectively.
\end{theorem}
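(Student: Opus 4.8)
The plan is to follow the proof of Theorem~\ref{theo:F} verbatim at the matrix level, replacing the scalar Delta method by its multivariate version acting on the trace functional. Writing $\boldsymbol{\Upsilon}=\Z\PPi$ so that $\Y=\boldsymbol{\Upsilon}+\V$, and using $\PP_Z\boldsymbol{\Upsilon}=\boldsymbol{\Upsilon}$ and $\mathbf{M}_Z\boldsymbol{\Upsilon}=\mathbf{0}$, I first record
\begin{equation*}
\Y'\PP_Z\Y=\PPi'\Z'\Z\PPi+\boldsymbol{\Upsilon}'\V+\V'\boldsymbol{\Upsilon}+\V'\PP_Z\V,\qquad \Y'\mathbf{M}_Z\Y=\V'\mathbf{M}_Z\V,
\end{equation*}
so that $\mathrm{tr}(\mathbf{CD})=\frac{n-K_n}{K_n}\mathrm{tr}[(\V'\mathbf{M}_Z\V)^{-1}\Y'\PP_Z\Y]$. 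Evaluating at the means $\mathrm{E}(\V'\mathbf{M}_Z\V)=(n-K_n)\boldsymbol{\Sigma}_{VV}$ and $\mathrm{E}(\V'\PP_Z\V)=K_n\boldsymbol{\Sigma}_{VV}$ reproduces the centering $p+\theta_n^2/K_n$, since $\mathrm{tr}(\boldsymbol{\Sigma}_{VV}^{-1}\PPi'\Z'\Z\PPi)=\theta_n^2$.

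Next I would show that neither the cross terms nor the signal term contribute to the limiting variance. Under $\theta_n^2=o(K_n)=o(n)$ one has $\mathrm{tr}(\boldsymbol{\Upsilon}'\boldsymbol{\Upsilon})=O(\theta_n^2)=o(n)$, so each entry of $\boldsymbol{\Upsilon}'\V$ has second moment $o(n)$; combined with $(\V'\mathbf{M}_Z\V)^{-1}=\frac{1}{n-K_n}\boldsymbol{\Sigma}_{VV}^{-1}(1+o_p(1))$ this makes the cross terms contribute $o_p(n^{-1/2})$ to $\mathrm{tr}(\mathbf{CD})$, exactly as $\vv'\boldsymbol{\Upsilon}/n=o_p(n^{-1/2})$ in Theorem~\ref{theo:F}. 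For the signal term, expanding $(\V'\mathbf{M}_Z\V)^{-1}$ around $\frac{1}{n-K_n}\boldsymbol{\Sigma}_{VV}^{-1}$ shows its fluctuation is $\frac{\theta_n^2}{K_n}O_p(n^{-1/2})$, which is $o_p(n^{-1/2})$ precisely because $\theta_n^2/K_n\to0$; hence the signal enters only the centering, as in the univariate case.

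The core of the argument is the matrix analogue of Lemma~\ref{lem:F}: the joint CLT
\begin{equation*}
\left(\frac{\mathrm{vec}(\V'\V-n\boldsymbol{\Sigma}_{VV})}{\sqrt{n}},\ \frac{\mathrm{vec}(\V'\PP_Z\V-K_n\boldsymbol{\Sigma}_{VV})}{\sqrt{n}}\right)\stackrel{d}{\rightarrow}N\!\left(\mathbf{0},\begin{pmatrix}\boldsymbol{\Sigma}_1 & \boldsymbol{\Sigma}_{12}\\ \boldsymbol{\Sigma}_{12}' & \boldsymbol{\Sigma}_2\end{pmatrix}\right),
\end{equation*}
which I obtain from Theorem~2 of \cite{wang2014joint} applied to the distinct entries of the two quadratic forms with $\mathbf{A}_n=\mathbf{I}_n$ and $\mathbf{B}_n=\PP_Z$; the blocks are the $\boldsymbol{\Sigma}_1,\boldsymbol{\Sigma}_2$ of (\ref{eq:sigma11})--(\ref{eq:sigma22}), and the same computation as in Lemma~\ref{lem:F} yields the crucial identity $\boldsymbol{\Sigma}_{12}=\alpha\boldsymbol{\Sigma}_1$. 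I then apply the multivariate Delta method to $g(A,B)=\frac{n-K_n}{K_n}\mathrm{tr}[(A-B)^{-1}B]$ at the base point $(A,B)=(\boldsymbol{\Sigma}_{VV},\alpha\boldsymbol{\Sigma}_{VV})$. Using $d\,\mathrm{tr}(X^{-1}Y)=\mathrm{tr}(X^{-1}\,dY)-\mathrm{tr}(X^{-1}(dX)X^{-1}Y)$ together with $(A-B)^{-1}B=\frac{\alpha}{1-\alpha}\mathbf{I}_p$, the differential collapses to
\begin{equation*}
dg=-\frac{1}{1-\alpha}\mathrm{tr}(\boldsymbol{\Sigma}_{VV}^{-1}\,dA)+\frac{1}{\alpha(1-\alpha)}\mathrm{tr}(\boldsymbol{\Sigma}_{VV}^{-1}\,dB),
\end{equation*}
so the gradient contracts each covariance block against $\mathrm{vec}(\boldsymbol{\Sigma}_{VV}^{-1})$. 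Propagating the limiting covariance and invoking $\boldsymbol{\Sigma}_{12}=\alpha\boldsymbol{\Sigma}_1$ cancels the $\boldsymbol{\Sigma}_1$ and the cross contribution down to $\frac{1}{(1-\alpha)^2}\mathrm{vec}'(\boldsymbol{\Sigma}_{VV}^{-1})(\alpha^{-2}\boldsymbol{\Sigma}_2-\boldsymbol{\Sigma}_1)\mathrm{vec}(\boldsymbol{\Sigma}_{VV}^{-1})$, which is the stated $\sigma_0^2$ (identifying $\boldsymbol{\Sigma}_3=\boldsymbol{\Sigma}_1$, $\boldsymbol{\Sigma}_4=\boldsymbol{\Sigma}_2$).

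The hardest part is the joint CLT with the correct covariance blocks: one must track all $p(p+1)/2$ distinct entries of $\V'\V$ and $\V'\PP_Z\V$ simultaneously, verify that the entrywise limits of \cite{wang2014joint} assemble into the matrices $\boldsymbol{\Sigma}_1,\boldsymbol{\Sigma}_2$ of (\ref{eq:sigma11})--(\ref{eq:sigma22}), and confirm the relation $\boldsymbol{\Sigma}_{12}=\alpha\boldsymbol{\Sigma}_1$ that produces the clean cancellation. The Delta-method linearization of the trace-of-ratio functional is then routine matrix calculus once this CLT and the negligibility of the cross and signal terms are in hand.
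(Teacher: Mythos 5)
Your proposal is correct and follows essentially the same route as the paper: the joint CLT for $\mathrm{vec}(\V'\V)$ and $\mathrm{vec}(\V'\PP_Z\V)$ via Theorem~2 of \cite{wang2014joint} with $\mathbf{A}_n=\mathbf{I}_n$, $\mathbf{B}_n=\PP_Z$ (the paper's Lemma~\ref{lem:4comp}, whose off-diagonal block is exactly your $\boldsymbol{\Sigma}_{12}=\alpha\boldsymbol{\Sigma}_1$), negligibility of the cross term $\V'\Z\PPi/n$ by a second-moment bound, and the Delta method on the trace functional with the same gradient $\bigl(-\tfrac{1}{1-\alpha}\mathrm{vec}'(\boldsymbol{\Sigma}_{VV}^{-1}),\ \tfrac{1}{\alpha(1-\alpha)}\mathrm{vec}'(\boldsymbol{\Sigma}_{VV}^{-1})\bigr)$ yielding the stated $\sigma_0^2$. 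Your identification $\boldsymbol{\Sigma}_3=\boldsymbol{\Sigma}_1$, $\boldsymbol{\Sigma}_4=\boldsymbol{\Sigma}_2$ correctly resolves what is a labeling typo in the theorem statement, and your explicit treatment of the signal-term fluctuation being $o_p(n^{-1/2})$ under $\theta_n^2/K_n\to0$ is a point the paper leaves implicit.
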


Theorem \ref{theo:cd} establishes the asymptotic normality of the trace of the Cragg-Donald statistic with many instruments. This result holds for a general number of endogenous variables and the errors can be nonnormal. Unfortunately, the limiting variance $\sigma_0^2$ has a complex form. It can be nonaccessible from data, see Remark \ref{rem:1}. This problem vanishes for the case of two endogenous variables and we have the following result.
\begin{corollary} \label{p2} For $p=2$, when $\theta_n^2/K_n\rightarrow0$, under Assumptions \ref{assum:1}, \ref{assum:2} and \ref{Fassum:3}, as $n\rightarrow\infty$,
\begin{equation*}
   \sqrt{n}\left(\mathrm{tr}(\mathbf{CD})-2-\frac{\theta_n^2}{K_n}\right) \stackrel{d}{\rightarrow} N\left(0,\frac{4}{\alpha(1-\alpha)}\right).
\end{equation*}
\end{corollary}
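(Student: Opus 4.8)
The plan is to treat Corollary \ref{p2} as a purely algebraic reduction of Theorem \ref{theo:cd}. That theorem already supplies the limit law $\sqrt{n}(\mathrm{tr}(\mathbf{CD})-p-\theta_n^2/K_n)\stackrel{d}{\rightarrow}N(0,\sigma_0^2)$ together with the closed form $\sigma_0^2=(1-\alpha)^{-2}\mathrm{vec}'(\boldsymbol{\Sigma}_{VV}^{-1})(\alpha^{-2}\boldsymbol{\Sigma}_4-\boldsymbol{\Sigma}_3)\mathrm{vec}(\boldsymbol{\Sigma}_{VV}^{-1})$, so the entire task is to evaluate this scalar when $p=2$ and to show that, under Assumption \ref{Fassum:3}, it collapses to the data-free constant $4/(\alpha(1-\alpha))$. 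I would not re-derive the CLT; instead I would read off $\boldsymbol{\Sigma}_3$ and $\boldsymbol{\Sigma}_4$ as the asymptotic covariance matrices of the vectorized quadratic forms $\mathrm{vec}(\V'\V)$ and $\mathrm{vec}(\V'\PP_Z\V)$ that feed the delta-method step behind Theorem \ref{theo:cd} (the multivariate counterpart of Lemma \ref{lem:F}), and work entirely with those.

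The key step is a structural decomposition of each covariance matrix. For i.i.d.\ mean-zero $\V_i$ with covariance $\boldsymbol{\Sigma}_{VV}$ and any symmetric weight $\mathbf{W}$, the covariance of $\mathrm{vec}(\V'\mathbf{W}\V)$ splits as $\mathrm{tr}(\mathbf{W}^2)(\mathbf{I}_{p^2}+\mathbf{K}_{pp})(\boldsymbol{\Sigma}_{VV}\otimes\boldsymbol{\Sigma}_{VV})+(\sum_i W_{ii}^2)\boldsymbol{\Gamma}_4$, where $\mathbf{K}_{pp}$ is the commutation matrix and $\boldsymbol{\Gamma}_4$ the (distribution-dependent, generally inaccessible) fourth-cumulant matrix of $\V_1$. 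Taking $\mathbf{W}=\mathbf{I}_n$ and $\mathbf{W}=\PP_Z$, and using $\mathrm{tr}(\PP_Z)=\mathrm{tr}(\PP_Z^2)=K_n$ together with $\sum_i P_{ii}^2\to n\omega$, gives $\boldsymbol{\Sigma}_3$ and $\boldsymbol{\Sigma}_4$ each as a Gaussian piece plus a multiple of $\boldsymbol{\Gamma}_4$. Substituting into $\sigma_0^2$ separates the variance into a ``Gaussian'' contribution and a ``cumulant'' contribution.

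The two contributions are then evaluated separately. For the Gaussian piece I would use the vec--Kronecker identities $(\boldsymbol{\Sigma}_{VV}\otimes\boldsymbol{\Sigma}_{VV})\mathrm{vec}(\boldsymbol{\Sigma}_{VV}^{-1})=\mathrm{vec}(\boldsymbol{\Sigma}_{VV})$, $\mathbf{K}_{pp}\mathrm{vec}(\boldsymbol{\Sigma}_{VV})=\mathrm{vec}(\boldsymbol{\Sigma}_{VV})$ (by symmetry), and $\mathrm{vec}'(\boldsymbol{\Sigma}_{VV}^{-1})\mathrm{vec}(\boldsymbol{\Sigma}_{VV})=\mathrm{tr}(\mathbf{I}_p)=p$, whence $\mathrm{vec}'(\boldsymbol{\Sigma}_{VV}^{-1})(\mathbf{I}_{p^2}+\mathbf{K}_{pp})(\boldsymbol{\Sigma}_{VV}\otimes\boldsymbol{\Sigma}_{VV})\mathrm{vec}(\boldsymbol{\Sigma}_{VV}^{-1})=2p$, independently of $\boldsymbol{\Sigma}_{VV}$; collecting the $\alpha$-weights leaves a Gaussian contribution $2p/(\alpha(1-\alpha))$. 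For the cumulant piece I expect the $\alpha$- and $\omega$-weights attached to $\mathrm{vec}'(\boldsymbol{\Sigma}_{VV}^{-1})\boldsymbol{\Gamma}_4\mathrm{vec}(\boldsymbol{\Sigma}_{VV}^{-1})$ to telescope into the single scalar factor $(\omega-\alpha^2)/(\alpha^2(1-\alpha)^2)$, exactly as $(\omega-\alpha^2)$ multiplies the kurtosis term in Theorem \ref{theo:F} and Corollary \ref{size_F}. Assumption \ref{Fassum:3} is precisely what kills this factor---either through balanced instruments ($\omega=\alpha^2$) or through a normality/zero-cumulant condition ($\boldsymbol{\Gamma}_4=\mathbf{0}$), paralleling the two alternatives in Theorem \ref{theo:wald}. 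Setting $p=2$ then yields $\sigma_0^2=4/(\alpha(1-\alpha))$.

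The main obstacle is the cumulant bookkeeping: confirming that the weights really do telescope to $(\omega-\alpha^2)$ rather than leaving a residual $\boldsymbol{\Gamma}_4$ term with a nonzero coefficient, since that would defeat both the simplification and the data-accessibility claim. Crucially, $\V'\PP_Z\V$ and $\V'\V$ are \emph{not} asymptotically independent under nonnormal errors; their cross-covariance itself contributes a $\boldsymbol{\Gamma}_4$ term, and I would carry it explicitly and use the structural identity that the cross-covariance between $\V'\PP_Z\V$ and $\V'\V$ equals $\alpha$ times the covariance of $\V'\V$, which is what makes the three weighted quadratic forms combine into the compact difference $\alpha^{-2}\boldsymbol{\Sigma}_4-\boldsymbol{\Sigma}_3$. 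Everything else is routine trace algebra.
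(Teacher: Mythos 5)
Your proposal is correct and follows the route the paper intends: Corollary \ref{p2} is obtained by specializing the variance $\sigma_0^2$ of Theorem \ref{theo:cd}, and your decomposition of each covariance matrix into a Gaussian piece plus a fourth-cumulant piece (with cross-covariance $\alpha\boldsymbol{\Sigma}_1$) reproduces exactly the structure of Lemma \ref{lem:4comp}, yielding $\alpha^{-2}\boldsymbol{\Sigma}_2-\boldsymbol{\Sigma}_1=\frac{1-\alpha}{\alpha}(\mathbf{I}_{p^2}+\mathbf{K}_{pp})(\boldsymbol{\Sigma}_{VV}\otimes\boldsymbol{\Sigma}_{VV})+\frac{\omega-\alpha^2}{\alpha^2}\boldsymbol{\Gamma}_4$ and hence $\sigma_0^2=2p/(\alpha(1-\alpha))=4/(\alpha(1-\alpha))$ for $p=2$ once balancedness or normality removes the cumulant term. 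The paper omits this computation entirely (and its statement of Theorem \ref{theo:cd} mislabels $\boldsymbol{\Sigma}_1,\boldsymbol{\Sigma}_2$ as $\boldsymbol{\Sigma}_3,\boldsymbol{\Sigma}_4$); your trace and vec--Kronecker identities supply precisely the missing algebra.
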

Hence, when there are two endogenous variables, Corollary \ref{p2} allows us to construct confidence intervals for the strength of instruments and propose an additional test statistic for 
\begin{equation} \label{null1}
    H_0^{(2)}:\frac{\theta_n^2}{\sqrt{K_n}}\leq C 
\end{equation} using the fact that
\begin{equation}
    CD_{c}=\sqrt{\frac{K_n(n-K_n)}{4n}}\left[\mathrm{tr}(\mathbf{CD})-2-\frac{C}{\sqrt{K_n}}\right],
\end{equation}
 is asymptotically standard normal under the equality in $H_0^{(2)}$.

\begin{remark}\label{rem:1}
Obtaining the explicit form of $\sigma_0^2$ is tedious. For example, when $p=3$, if $\mathbf{v}_i\stackrel{i.i.d}{\sim}N_3\left(\mathbf{0},\mathrm{diag}(  \sigma_{1}^2,\sigma_{2}^2,\sigma_{3}^2)\right)$, then the limiting variance is ${6}/\left({\alpha(1-\alpha)}\right)$ under the same assumptions. Generally, the limiting variance will depend on $\mathrm{E}(v_{i1}^{m_1}v_{i2}^{m_2}v_{i3}^{m_3})$ with non-negative integers $m_1$, $m_2$ and $m_3$ satisfying $m_1+m_2+m_3=4$ and the covariances between $v_{i1}$, $v_{i2}$ and $v_{i3}$.
\end{remark}

\begin{remark}
    In practical applications, we recommend the  range of $[1,3]$ as a conservative interval indicative of many weak instruments when there are multiple endogenous variables.  Most of the simulation settings for weak instruments in literature corresponds to values of $C$ that are also consistent with the proposed range [1,3] (e.g. \citealt{wang1998inference}, \citealt{hansen2008estimation}, \citealt{hausman2012instrumental} and \citealt{wang2016bootstrap}). However, we acknowledge that determining the upper bound $C$ in equation (\ref{null1}) should be done separately. Developing a two-step procedure within the framework of many instruments and multiple endogenous variables still remains an open research question in the existing literature. Further investigation and advancements in this area are necessary to address this challenge effectively.
\end{remark}

\section{Proofs}\label{s: app}

\subsection{Proof of Theorem \ref{theo:bias}}
We first introduce the following lemma which establishes the probabilistic limit of 2SLS and OLS within the many instruments setup.
\begin{lemma}(Theorem 1, \citealt{huang2022specification})
    Under Assumptions \ref{assum:1} and \ref{assum:2}, as $n\rightarrow\infty$,
(a) when $\PPi'\Z'\Z\PPi/n \rightarrow \boldsymbol{\Theta}$  almost surely, $\hat{\bbeta}^{2SLS}  \stackrel{p}{\rightarrow} \bbeta+ \alpha\boldsymbol{\Theta}_1^{-1}\boldsymbol{\Sigma}_{Vu}$, $\hat{\bbeta}^{OLS} \stackrel{p}{\rightarrow} \bbeta+ \boldsymbol{\Theta}_2^{-1}\boldsymbol{\Sigma}_{Vu}$; (b) when $\PPi'\Z'\Z\PPi/n \rightarrow \boldsymbol{0}$  almost surely, both $\hat{\bbeta}^{2SLS}$ and $\hat{\bbeta}^{OLS}$ converge in probability to $ \bbeta+ \boldsymbol{\Sigma}_{VV}^{-1}\boldsymbol{\Sigma}_{Vu}$.\label{theo:limit} 
\end{lemma}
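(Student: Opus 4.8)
The plan is to read off the probability limits of the two estimators from Lemma~\ref{theo:limit}, promote them to limits of the \emph{means} of the biases using Assumption~\ref{assum:ui}, identify the limiting Gram matrix $\Sigma_{YY}$ that enters $B_n^2$, and then assemble the pieces into the closed form (\ref{eq:b}); the two inequalities $0<B^2<1$ reduce to a single Loewner-order comparison of matrices built from $\boldsymbol{\Theta}_1$ and $\boldsymbol{\Theta}_2$.

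First I would invoke Lemma~\ref{theo:limit}(a), which gives $\hat{\bbeta}^{2SLS}-\bbeta\stackrel{p}{\to}\alpha\boldsymbol{\Theta}_1^{-1}\boldsymbol{\Sigma}_{Vu}$ and $\hat{\bbeta}^{OLS}-\bbeta\stackrel{p}{\to}\boldsymbol{\Theta}_2^{-1}\boldsymbol{\Sigma}_{Vu}$. Since $B_n^2$ is defined through the \emph{expected} biases, convergence in probability alone is not enough: here Assumption~\ref{assum:ui} is essential, as uniform integrability of both estimators upgrades convergence in probability to convergence in $L^1$, so that $\mathrm{E}(\hat{\bbeta}^{2SLS}-\bbeta)\to\alpha\boldsymbol{\Theta}_1^{-1}\boldsymbol{\Sigma}_{Vu}$ and $\mathrm{E}(\hat{\bbeta}^{OLS}-\bbeta)\to\boldsymbol{\Theta}_2^{-1}\boldsymbol{\Sigma}_{Vu}$.

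Next I would identify $\Sigma_{YY}=\mathrm{plim}\,\Y'\Y/n$. Expanding $\Y=\Z\PPi+\V$ yields $\Y'\Y/n=\PPi'\Z'\Z\PPi/n+(\PPi'\Z'\V+\V'\Z\PPi)/n+\V'\V/n$; the first term converges to $\boldsymbol{\Theta}$ by hypothesis, the cross terms vanish in probability by exogeneity of $\Z$ (their second moments are of order $1/n$ given $\PPi'\Z'\Z\PPi/n\to\boldsymbol{\Theta}$), and $\V'\V/n\to\boldsymbol{\Sigma}_{VV}$ by the law of large numbers, so $\Sigma_{YY}=\boldsymbol{\Theta}+\boldsymbol{\Sigma}_{VV}=\boldsymbol{\Theta}_2$. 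Substituting the three limits into the definition of $B_n^2$, the denominator collapses through $\boldsymbol{\Theta}_2^{-1}\boldsymbol{\Theta}_2\boldsymbol{\Theta}_2^{-1}=\boldsymbol{\Theta}_2^{-1}$ to $\boldsymbol{\Sigma}_{uV}\boldsymbol{\Theta}_2^{-1}\boldsymbol{\Sigma}_{Vu}$, while the numerator becomes $\alpha^2\boldsymbol{\Sigma}_{uV}\boldsymbol{\Theta}_1^{-1}\boldsymbol{\Theta}_2\boldsymbol{\Theta}_1^{-1}\boldsymbol{\Sigma}_{Vu}$, which is exactly (\ref{eq:b}). The almost-sure qualifier follows because the convergence $\PPi'\Z'\Z\PPi/n\to\boldsymbol{\Theta}$ is assumed almost surely.

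For the bounds, the case $\boldsymbol{\Theta}=\mathbf{0}$ is immediate from Lemma~\ref{theo:limit}(b): both estimators share the limit $\bbeta+\boldsymbol{\Sigma}_{VV}^{-1}\boldsymbol{\Sigma}_{Vu}$, so numerator and denominator coincide and $B^2=1$ (equivalently, set $\boldsymbol{\Theta}_1=\alpha\boldsymbol{\Sigma}_{VV}$ and $\boldsymbol{\Theta}_2=\boldsymbol{\Sigma}_{VV}$ in (\ref{eq:b})). For $\boldsymbol{\Theta}\neq\mathbf{0}$, write $x=\boldsymbol{\Theta}_2^{-1/2}\boldsymbol{\Sigma}_{Vu}$ and $M=\boldsymbol{\Theta}_2^{1/2}\boldsymbol{\Theta}_1^{-1}\boldsymbol{\Theta}_2^{1/2}$, so that $B^2=\alpha^2\,x'M^2x/(x'x)$; since $M\succ\mathbf{0}$ the numerator is strictly positive whenever $\boldsymbol{\Sigma}_{Vu}\neq\mathbf{0}$, giving $B^2>0$. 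The crux is the upper bound, and the clean route is the identity $\boldsymbol{\Theta}_1-\alpha\boldsymbol{\Theta}_2=(1-\alpha)\boldsymbol{\Theta}\succeq\mathbf{0}$, which gives $\boldsymbol{\Theta}_1\succeq\alpha\boldsymbol{\Theta}_2$, hence $M^{-1}=\boldsymbol{\Theta}_2^{-1/2}\boldsymbol{\Theta}_1\boldsymbol{\Theta}_2^{-1/2}\succeq\alpha\mathbf{I}$, hence $\alpha M\preceq\mathbf{I}$ and therefore $\alpha^2M^2\preceq\mathbf{I}$ (as $M\succ\mathbf{0}$), yielding $B^2\le1$. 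The main obstacle, and the step needing care, is strictness: when $\boldsymbol{\Theta}\succ\mathbf{0}$ the inequality $(1-\alpha)\boldsymbol{\Theta}\succ\mathbf{0}$ is strict and propagates to $\alpha^2M^2\prec\mathbf{I}$, forcing $B^2<1$; for a merely rank-deficient $\boldsymbol{\Theta}\neq\mathbf{0}$ one additionally needs $\boldsymbol{\Sigma}_{Vu}$ to have a nonzero component outside the kernel of $\boldsymbol{\Theta}$ (in the $\boldsymbol{\Theta}_2^{-1/2}$-coordinates), which I would state explicitly as the condition under which the strict bound holds.
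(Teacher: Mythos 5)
Your proposal does not prove the statement under review. The statement is Lemma~\ref{theo:limit} itself --- the probability limits of $\hat{\bbeta}^{2SLS}$ and $\hat{\bbeta}^{OLS}$ under the simultaneous asymptotics $K_n/n\rightarrow\alpha$ --- yet your very first step is to invoke Lemma~\ref{theo:limit}(a) as a premise. What you then establish is the downstream result, Theorem~\ref{theo:bias} (the closed form (\ref{eq:b}) and the bounds $0<B^2<1$). As a proof of the lemma this is circular, and it contains no argument for the claimed plims. (In the paper the lemma likewise carries no proof; it is imported verbatim from Theorem~1 of \cite{huang2022specification}.) A blind proof would have to supply the underlying computation: write $\hat{\bbeta}^{2SLS}-\bbeta=\left(\Y'\PP_Z\Y/n\right)^{-1}\left(\Y'\PP_Z\uu/n\right)$, expand $\Y=\Z\PPi+\V$, and show that $\V'\PP_Z\V/n\stackrel{p}{\rightarrow}\alpha\boldsymbol{\Sigma}_{VV}$ and $\V'\PP_Z\uu/n\stackrel{p}{\rightarrow}\alpha\boldsymbol{\Sigma}_{Vu}$ --- the means are $(K_n/n)\boldsymbol{\Sigma}_{VV}$ and $(K_n/n)\boldsymbol{\Sigma}_{Vu}$ because $\mathrm{tr}(\PP_Z)=K_n$, and the variances are $O(1/n)$ under the fourth-moment condition of Assumption~\ref{assum:2} --- while the cross terms $\PPi'\Z'\V/n$ and $\PPi'\Z'\uu/n$ are $o_p(1)$ since their second moments are $O\!\left(\mathrm{tr}(\PPi'\Z'\Z\PPi)/n^2\right)=O(1/n)$ when $\PPi'\Z'\Z\PPi/n\rightarrow\boldsymbol{\Theta}$ (the same device the paper uses for $\vv'\boldsymbol{\Upsilon}/n$ in the proof of Theorem~\ref{theo:F}). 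The continuous mapping theorem then gives $\hat{\bbeta}^{2SLS}-\bbeta\stackrel{p}{\rightarrow}(\boldsymbol{\Theta}+\alpha\boldsymbol{\Sigma}_{VV})^{-1}\alpha\boldsymbol{\Sigma}_{Vu}=\alpha\boldsymbol{\Theta}_1^{-1}\boldsymbol{\Sigma}_{Vu}$; the OLS case is identical with $\PP_Z$ replaced by $\mathbf{I}_n$, yielding $\boldsymbol{\Theta}_2^{-1}\boldsymbol{\Sigma}_{Vu}$; and part~(b) follows by setting $\boldsymbol{\Theta}=\mathbf{0}$, where $\alpha(\alpha\boldsymbol{\Sigma}_{VV})^{-1}=\boldsymbol{\Sigma}_{VV}^{-1}$ makes the two limits coincide. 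None of these steps appears in your proposal.

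For what it is worth, had your target been Theorem~\ref{theo:bias}, your argument tracks the paper's supplementary proof closely, with one variation and one genuine improvement: where the paper bounds $\lambda_1\!\left(\alpha^2\boldsymbol{\Theta}_2^{1/2}\boldsymbol{\Theta}_1^{-1}\boldsymbol{\Theta}_2\boldsymbol{\Theta}_1^{-1}\boldsymbol{\Theta}_2^{1/2}\right)$ via the Woodbury identity and Weyl's inequality, your Loewner-order route ($\boldsymbol{\Theta}_1\succeq\alpha\boldsymbol{\Theta}_2$ hence $\alpha^2 M^2\preceq\mathbf{I}$) is shorter; and your observation that strictness of $B^2<1$ requires either $\boldsymbol{\Theta}\succ\mathbf{0}$ or a condition on $\boldsymbol{\Sigma}_{Vu}$ relative to the kernel of $\boldsymbol{\Theta}$ flags a real gap in the paper's proof, which tacitly treats $\boldsymbol{\Theta}^{-1}$ as well defined. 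But that is a critique of a different proof; it does not repair the circularity with respect to Lemma~\ref{theo:limit}.
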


To prove Theorem \ref{theo:bias}, note that $\boldsymbol{\Sigma}_{YY}=\boldsymbol{\Theta}_2$, we observe that results of convergence of $B_n^2$  holds naturally by applying Theorem \ref{theo:limit} directly. Now, we show that $B^2<1$ when $\PPi'\Z'\Z\PPi/n \rightarrow \boldsymbol{0}$. Define the degrees of simultaneity $\boldsymbol{\rho}=\boldsymbol{\Theta}_2^{-1/2}\boldsymbol{\Sigma}_{Vu}$, then 
\begin{equation*}
B^2=\frac{\alpha^2\boldsymbol{\rho}'\boldsymbol{\Theta}_2^{1/2}\boldsymbol{\Theta}_1^{-1}\boldsymbol{\Theta}_2\boldsymbol{\Theta}_1^{-1}\boldsymbol{\Theta}_2^{1/2}\boldsymbol{\rho}}{\boldsymbol{\rho}'\boldsymbol{\rho}}.
\end{equation*}
Similar to \cite{stock2002testing}, we consider the worst-case asymptotic relative bias
\begin{equation*}
    B_{\max }^2=\max _{\boldsymbol{\rho}}|B|^2.
\end{equation*}
It can be shown that $B_{\max }^2=\lambda_1(\alpha^2\boldsymbol{\Theta}_2^{1/2}\boldsymbol{\Theta}_1^{-1}\boldsymbol{\Theta}_2\boldsymbol{\Theta}_1^{-1}\boldsymbol{\Theta}_2^{1/2})$. Note that $\boldsymbol{\Theta}_1/\alpha-\boldsymbol{\Theta}_2= (1-\alpha)/\alpha \boldsymbol{\Theta}$ is positive definite, we have
\begin{align}
     \alpha\boldsymbol{\Theta}_2^{1/2}\boldsymbol{\Theta}_1^{-1}\boldsymbol{\Theta}_2^{1/2}&=\boldsymbol{\Theta}_2^{1/2}(\frac{1-\alpha}{\alpha }\boldsymbol{\Theta}+\boldsymbol{\Theta}_2)^{-1}\boldsymbol{\Theta}_2^{1/2} \nonumber
     \\&=\boldsymbol{\Theta}_2^{1/2}\left(\boldsymbol{\Theta}_2^{-1}-\boldsymbol{\Theta}_2^{-1}(\boldsymbol{\Theta}_2^{-1}+ \frac{\alpha}{1-\alpha }\boldsymbol{\Theta}^{-1})^{-1}\boldsymbol{\Theta}_2^{-1} \right)\boldsymbol{\Theta}_2^{1/2}\nonumber \\&=
\mathrm{I}_p-\boldsymbol{\Theta}_2^{-1/2}(\boldsymbol{\Theta}_2^{-1}+ \frac{\alpha}{1-\alpha }\boldsymbol{\Theta}^{-1})^{-1}\boldsymbol{\Theta}_2^{-1/2},
\end{align}
where the second equality follows from the Woodbury matrix identity. Next, by Weyl's inequality, it follows that
\begin{equation*}
    \lambda_1(\alpha\boldsymbol{\Theta}_2^{1/2}\boldsymbol{\Theta}_1^{-1}\boldsymbol{\Theta}_2^{1/2})+\lambda_p\left(\boldsymbol{\Theta}_2^{-1/2}(\boldsymbol{\Theta}_2^{-1}+\frac{\alpha}{1-\alpha }\boldsymbol{\Theta}^{-1})^{-1}\boldsymbol{\Theta}_2^{-1/2}\right) \leq 1.
\end{equation*}
Note that $\boldsymbol{\Theta}_2^{-\frac{1}{2}}(\boldsymbol{\Theta}_2^{-1}+ \frac{\alpha}{1-\alpha }\boldsymbol{\Theta}^{-1})^{-1}\boldsymbol{\Theta}_2^{-\frac{1}{2}}$ is positive definite, then $\lambda_1(\alpha^2\boldsymbol{\Theta}_2^{1/2}\boldsymbol{\Theta}_1^{-1}\boldsymbol{\Theta}_2\boldsymbol{\Theta}_1^{-1}\boldsymbol{\Theta}_2^{1/2})\leq \lambda_1^2(\alpha\boldsymbol{\Theta}_2^{1/2}\boldsymbol{\Theta}_1^{-1}\boldsymbol{\Theta}_2^{1/2})<1$.

\subsection{Proof of Theorem \ref{theo:cd}}
To prove Theorem \ref{theo:cd}, we introduce Lemma \ref{lem:4comp}, which establishes the joint CLT of four
sesquilinear forms that make up the Cragg-Donald statistic, and then apply Delta method to it.

We firstly define the following variables:
\begin{equation} \label{v1}
\Dot{\V}:=\left[\underbrace{\overbrace{\V(1),\cdots,\V(1)}^{p},\cdots,\overbrace{\V(p),\cdots,\V(p)}^p}_{p^2}\right] \in \mathbb{R}^{n \times p^2},
\end{equation}
and
\begin{equation} \label{v2}
    \Ddot{\V}:=\left[\underbrace{\overbrace{\V(1),\cdots,\V(p)}^{p},\cdots,\overbrace{\V(1),\cdots,\V(p)}^p}_{p^2}\right] \in \mathbb{R}^{n \times p^2}.
\end{equation} 
We then establish the joint distribution of two key components which make up the Cragg-Donald statistic in the following lemma:
\begin{lemma}\label{lem:4comp}
Under Assumption \ref{assum:1} and \ref{assum:2}, as $n\rightarrow \infty$, we have
$$
\sqrt{n}\cdot\left[\left(\begin{array}{l}
n^{-1} \mathrm{vec}(\V'\V) \\
n^{-1} \mathrm{vec}(\V'\PP_Z\V)
\end{array}\right)-\left(\begin{array}{c}
\mathrm{vec}(\boldsymbol{\Sigma}_{VV})\\
\frac{K_n}{n} \mathrm{vec}(\boldsymbol{\Sigma}_{VV})
\end{array}\right)\right] \stackrel{d}{\rightarrow} N\left(\mathbf{0}, \boldsymbol{\Sigma}_0\right),
$$
where $$\boldsymbol{\Sigma}_0=\left[\begin{array}{cc}
\boldsymbol{\Sigma}_1 & \alpha \boldsymbol{\Sigma}_1 \\
\alpha \boldsymbol{\Sigma}_1  & \boldsymbol{\Sigma}_2
\end{array}\right]$$ with
\begin{equation} \label{eq:sigma11}
    \boldsymbol{\Sigma}_{1,ij}=\mathrm{Cov}(\Dot{V}_{1i}\Ddot{V}_{1i},\Dot{V}_{1j}\Ddot{V}_{1j}), 
\end{equation}
\begin{equation} \label{eq:sigma22}
    \boldsymbol{\Sigma}_{2,ij}=\omega \boldsymbol{\Sigma}_{11,ij}+(\alpha-\omega)\left(\mathrm{E}(\Dot{V}_{1i}\Dot{V}_{1j})\mathrm{E}(\Ddot{V}_{1i}\Ddot{V}_{1j})+\mathrm{E}(\Dot{V}_{1i}\Ddot{V}_{1j})\mathrm{E}(\Ddot{V}_{1j}\Ddot{V}_{1i})\right). 
\end{equation}
\end{lemma}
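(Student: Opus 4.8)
The plan is to recognize Lemma~\ref{lem:4comp} as the multivariate counterpart of Lemma~\ref{lem:F}: instead of a single error vector $\vv$, we now track the joint fluctuations of all $p^2$ sesquilinear forms $\V(k)'\V(l)=\sum_i V_{ik}V_{il}$ and $\V(k)'\PP_Z\V(l)=\sum_{i,j}(\PP_Z)_{ij}V_{ik}V_{jl}$ that make up $\mathrm{vec}(\V'\V)$ and $\mathrm{vec}(\V'\PP_Z\V)$. First I would center each form: since the rows $\V_i$ are i.i.d.\ with mean zero, $\mathrm{E}(\V'\V)=n\boldsymbol{\Sigma}_{VV}$, and because $\mathrm{tr}(\PP_Z)=K_n$ while the off-diagonal cross terms vanish in expectation, $\mathrm{E}(\V'\PP_Z\V)=K_n\boldsymbol{\Sigma}_{VV}$; this pins down the two centering vectors $\mathrm{vec}(\boldsymbol{\Sigma}_{VV})$ and $\tfrac{K_n}{n}\mathrm{vec}(\boldsymbol{\Sigma}_{VV})$ in the statement.

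To obtain the joint CLT I would invoke the joint central limit theorem for several sesquilinear forms of \cite{wang2014joint}, exactly the engine used for Lemma~\ref{lem:F} and Lemma~\ref{lem1:wald}, now applied to the whole family of $2p^2$ forms generated by the matrices $\mathbf{A}_n=\mathbf{I}_n$ and $\mathbf{B}_n=\PP_Z$ together with the $p$ correlated coordinate sequences $\{V_{ik}\}_i$, $k=1,\dots,p$. Equivalently, via the Cram\'er--Wold device one reduces to a single scalar form $\mathrm{tr}(\mathbf{A}\V'\V)+\mathrm{tr}(\mathbf{B}\V'\PP_Z\V)$ for $p\times p$ coefficient matrices $\mathbf{A},\mathbf{B}$; diagonalizing the symmetric parts and writing $g_{ir}=\mathbf{e}_r'\V_i$ turns this into a finite linear combination of scalar forms $\mathbf{g}_r'\mathbf{I}_n\mathbf{g}_s$ and $\mathbf{g}_r'\PP_Z\mathbf{g}_s$ in the i.i.d.\ scalar sequences $\{g_{ir}\}_i$, to which \cite{wang2014joint} applies directly. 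The required regularity conditions---finite fourth moments (Assumption~\ref{assum:2}), the ratio $K_n/n\to\alpha$ (Assumption~\ref{assum:1}), and the existence of $\omega=\lim n^{-1}\sum_i P_{ii}^2$ with the uniform bound $P_{ii}\in[0,1]$---are inherited from the scalar case.

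The substance of the proof is then the covariance bookkeeping that identifies $\boldsymbol{\Sigma}_0$. The top-left block is the ordinary i.i.d.\ covariance of $\mathrm{vec}(\V_1\V_1')$, giving $\boldsymbol{\Sigma}_1$ in (\ref{eq:sigma11}). The bottom-right block splits into a diagonal part, weighted by $\sum_i P_{ii}^2\sim\omega n$ and carrying the full fourth-moment covariance $\omega\boldsymbol{\Sigma}_1$, and an off-diagonal part, weighted by $\sum_{i\neq j}P_{ij}^2=\mathrm{tr}(\PP_Z^2)-\sum_iP_{ii}^2=K_n-\sum_iP_{ii}^2\sim(\alpha-\omega)n$ and carrying the Gaussian-type (``Wick'') contraction $\mathrm{E}(\Dot{V}_{1i}\Dot{V}_{1j})\mathrm{E}(\Ddot{V}_{1i}\Ddot{V}_{1j})+\mathrm{E}(\Dot{V}_{1i}\Ddot{V}_{1j})\mathrm{E}(\Ddot{V}_{1j}\Ddot{V}_{1i})$, reproducing (\ref{eq:sigma22}); here idempotency of $\PP_Z$ is what converts $\sum_{i\neq j}P_{ij}^2$ into $K_n$ minus the diagonal sum. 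The off-diagonal block couples the identity and projection forms only through the common diagonal $i=j$, with weight $n^{-1}\sum_iP_{ii}=K_n/n\to\alpha$, hence equals $\alpha\boldsymbol{\Sigma}_1$. The main obstacle I anticipate is precisely this bookkeeping in the multivariate setting: keeping the fourth-moment tensor of $\V_1$ organized through the $\Dot{\V}$/$\Ddot{\V}$ notation of (\ref{v1})--(\ref{v2}), and cleanly separating the diagonal ($\omega$) from the off-diagonal ($\alpha-\omega$) contributions while verifying that the cross moments generated by $\PP_Z$ match the stated Wick form; the CLT itself is a routine application of \cite{wang2014joint} once these moment computations are in place.
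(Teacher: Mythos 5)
Your proposal is correct and follows essentially the same route as the paper: the paper's proof is a direct application of Theorem 2 of \cite{wang2014joint} to the stacked arrays $\Dot{\V}$ and $\Ddot{\V}$ with $\mathbf{A}_n=\mathbf{I}_n$ and $\mathbf{B}_n=\PP_Z$, verifying $\omega_1=\theta_1=\tau_1=1$, $\theta_2=\tau_2=\omega_3=\theta_3=\tau_3=\alpha$ and $\omega_2=\omega$, exactly as you do. Your additional Cram\'er--Wold reduction and explicit covariance bookkeeping (the $\omega$ versus $\alpha-\omega$ split via $\sum_{i\neq j}P_{ij}^2=K_n-\sum_iP_{ii}^2$) correctly fills in details the paper leaves implicit.
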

\begin{proof}
    We apply Theorem 2 in \cite{wang2014joint} to $\Dot{\V}$ and $\Ddot{\V}$ defined in (\ref{v1}) and (\ref{v2}), respectively, by setting $\mathbf{A}_n=\mathbf{I}_n$ and $\mathbf{B}_n=\mathbf{P}_Z$. Again we have the defined quantities $\omega_1=\theta_1=\tau_1=1$, $\theta_2=\tau_2=\omega_3=\theta_3=\tau_3=\alpha$ and $\omega_2=\omega$. The limiting covariance matrix then turned out to be $\boldsymbol{\Sigma}_{1}$ and $\boldsymbol{\Sigma}_{2}$.
\end{proof}

\noindent\textit{Proof of Theorem \ref{theo:cd}}: 
Note that
\begin{equation} \label{eq:cd}
    \mathbf{CD}= \frac{n-K_n}{K_n}\left(\frac{\PPi'\Z'\Z\PPi}{n}+\frac{\PPi'\Z'\V}{n}+\frac{\V'\Z\PPi}{n}+\frac{\V'\PP_{Z}\V}{n}\right)\left(\frac{\V'\V}{n}-\frac{\V'\PP_Z\V}{n} \right)^{-1}.
\end{equation}
Besides,
\begin{equation*}
    \frac{\V'\Z\PPi}{n}=o_p\left(\frac{1}{\sqrt{n}}\right), 
\end{equation*}
as
\begin{equation*}
    \mathrm{E}\left\| \frac{\V'\Z\PPi}{n}\right\|^2=\mathrm{E} \left[\mathrm{tr}\left(\frac{\PPi'\Z'\V\V'\Z\PPi}{n^2}\right)\right]= \mathrm{tr}(\boldsymbol{\Sigma}_{VV})\mathrm{E} \left[\mathrm{tr}\left(\frac{\PPi'\Z'\Z\PPi}{n^2}\right)\right]=o\left(\frac{1}{n}\right).
\end{equation*}
We then apply Delta method with $f:\mathbb{R}^{2p^2}\rightarrow\mathbb{R}$ satisfying that 
\begin{equation} \label{cd:delta}
    f\left(\begin{array}{l}
n^{-1} \mathrm{vec}(\V'\V) \\
n^{-1} \mathrm{vec}(\V'\PP_Z\V)
\end{array}\right)=\mathrm{tr}(\mathbf{CD}),
\end{equation}
with $$\nabla f= \left(\boldsymbol{f}_1,\boldsymbol{f}_2\right),$$
where
$$\boldsymbol{f}_1=-\frac{1}{1-\alpha}\left(\mathrm{tr}(\boldsymbol{\Sigma}_{VV}^{-1}\boldsymbol{J}^{11}),\mathrm{tr}(\boldsymbol{\Sigma}_{VV}^{-1}\boldsymbol{J}^{12}),\dots,\mathrm{tr}(\boldsymbol{\Sigma}_{VV}^{-1}\boldsymbol{J}^{pp}) \right)=-\frac{1}{1-\alpha}\mathrm{vec}'(\boldsymbol{\Sigma}_{VV}^{-1}),$$ and $$\boldsymbol{f}_4=-\alpha^{-1}\boldsymbol{f}_1=\frac{1}{\alpha(1-\alpha)}\mathrm{vec}'(\boldsymbol{\Sigma}_{VV}^{-1}).$$ It yields that 
\begin{equation*}
    \sqrt{n}\left(\mathrm{tr}(\mathbf{CD})-p-\frac{\mathrm{tr}(\boldsymbol{\Sigma}_{VV}^{-1/2}\PPi'\Z'\Z\PPi\boldsymbol{\Sigma}_{VV}^{-1/2})}{K_n}\right) \stackrel{d}{\rightarrow} N(0,\sigma_0^2),
\end{equation*}
where $\sigma_0^2=(1-\alpha)^{-2}\mathrm{vec}'(\boldsymbol{\Sigma}_{VV}^{-1})\left(\alpha^{-2}\boldsymbol{\Sigma}_2-\boldsymbol{\Sigma}_1\right)\mathrm{vec}(\boldsymbol{\Sigma}_{VV}^{-1})$.

\subsection{On the case of unbalanced instruments and non-mesokurtic errors}
We now discuss the behavior of the proposed $F_c$ test when Assumption \ref{Fassum:3} is violated, i.e., the instruments are asymptotically unbalanced and errors are non-mesokurtic. The results are summarized in the following theorem.

\begin{theorem} \label{theo:unbalanced}
    Under Assumptions \ref{assum:1} and \ref{assum:2}, as $n\rightarrow \infty$, we have
    \begin{enumerate}
        \item when ${\mathrm{E(v_1^4)}}<3\sigma_{vv}^4$, $\mathbb{P}\left( F_c>\Phi^{-1}(1-\tau) \right) \rightarrow \tau_1 \leq \tau$;
        \item when ${\mathrm{E(v_1^4)}}>3\sigma_{vv}^4$,  $\mathbb{P}\left( F_c>\Phi^{-1}(1-\tau) \right) \rightarrow \tau_2 \geq \tau$.
    \end{enumerate}
\end{theorem}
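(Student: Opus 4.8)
The plan is to read off the limiting law of $F_c$ directly from Theorem \ref{theo:F} and then track how the variance of that limit crosses unity as the kurtosis of $v_1$ departs from the mesokurtic value $3\sigma_{vv}^4$. First I would evaluate $F_c$ on the boundary $\mu_n^2/\sqrt{K_n}=C$ of the null, where the offset $C/\sqrt{K_n}$ in (\ref{eq:ftest}) coincides \emph{exactly} with the centering $\mu_n^2/K_n$ of Theorem \ref{theo:F}, since $\mu_n^2=C\sqrt{K_n}$ forces $\mu_n^2/K_n=C/\sqrt{K_n}$ (and this also certifies $\mu_n^2/K_n\to 0$). Writing the deterministic prefactor as $\sqrt{K_n(n-K_n)/(2n)}=\sqrt{n}\cdot\sqrt{K_n(n-K_n)/(2n^2)}$ and noting that the second factor converges to $\sqrt{\alpha(1-\alpha)/2}$, Slutsky's theorem combined with (\ref{asym:F}) yields $F_c\stackrel{d}{\to}N(0,V)$, where $V=\tfrac{\alpha(1-\alpha)}{2}\sigma_F^2$ and $\sigma_F^2$ is the variance appearing in (\ref{asym:F}).

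The key algebraic step is to simplify $V-1$. A short computation collapses the numerator to a product:
\begin{equation*}
V-1=\frac{(\omega-\alpha^2)\bigl[\mathrm{E}(v_1^4)-3\sigma_{vv}^4\bigr]}{2\alpha(1-\alpha)\sigma_{vv}^4}.
\end{equation*}
Because $\tfrac{1}{n}\sum_i P_{ii}=K_n/n\to\alpha$, the Cauchy--Schwarz (equivalently Jensen) inequality applied to the leverages $P_{ii}\in[0,1]$ forces $\omega=\lim\tfrac1n\sum_i P_{ii}^2\ge\alpha^2$, with equality precisely in the balanced case. Hence the sign of $V-1$ is governed entirely by the sign of $\mathrm{E}(v_1^4)-3\sigma_{vv}^4$: platykurtic errors give $V\le 1$ and leptokurtic errors give $V\ge 1$.

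Finally I would convert this variance displacement into a size statement. Since $F_c\stackrel{d}{\to}N(0,V)$, the rejection probability converges to $\tau^\ast:=1-\Phi\bigl(\Phi^{-1}(1-\tau)/\sqrt{V}\bigr)$. For any significance level $\tau<1/2$ we have $\Phi^{-1}(1-\tau)>0$, so monotonicity of $\Phi$ gives $\tau^\ast\le\tau$ when $V\le 1$ (Case 1, with $\tau_1=\tau^\ast$) and $\tau^\ast\ge\tau$ when $V\ge 1$ (Case 2, with $\tau_2=\tau^\ast$), which is the assertion.

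I do not expect a genuine analytic obstacle here: the limiting law is already delivered by Theorem \ref{theo:F}, so the argument is essentially a corollary of it. The only substantive points are the factorization of $V-1$ and the inequality $\omega\ge\alpha^2$ for the leverages, which is exactly where the balanced-versus-unbalanced dichotomy enters; the remaining care is purely in the bookkeeping that links the scaling factor, the null boundary, and the kurtosis.
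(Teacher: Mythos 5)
Your proposal is correct and follows essentially the same route as the paper: both derive the limiting normal law of $F_c$ from Theorem \ref{theo:F}, reduce the size statement to the sign of a variance discrepancy (your $V-1$ is algebraically the same object as the paper's $\sigma_{F_c}^2/\sigma_F^2-1$ up to the reciprocal), factor the numerator as $(\omega-\alpha^2)\bigl[\mathrm{E}(v_1^4)-3\sigma_{vv}^4\bigr]$, and invoke Cauchy--Schwarz for $\omega\ge\alpha^2$. The only differences are cosmetic bookkeeping (working with $V$ rather than the ratio $\sigma_{F_c}/\sigma_F$, and your explicit note that $\tau<1/2$ is needed for the monotonicity step, which the paper leaves implicit).
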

Theorem \ref{theo:unbalanced} demonstrates that the proposed $F_c$ test will exhibit size distortions with the presence of asymptotically unbalanced instruments. Furthermore, the $F_c$ test will be conservative (oversized) when the errors are platykurtic (leptokurtic). As a result, applying the limiting variance in Corallary \ref{cor:fadj} is not the perfect answer for the case of both asymptotically unbalanced instruments  and non-mesokurtic errors.

 We consider bounding $\sigma_{F}^2$ without Assumption \ref{Fassum:3}.  However, the best achievable bound is given by
     \begin{equation}
         \sigma_{F}^2 \geq \frac{2\alpha-3\omega+\alpha^2}{\alpha^2(1-\alpha)^2}:=\sigma_{L}^2 \nonumber
     \end{equation}
     so that the estimation of limiting variance can be achieved. Nonetheless, if one attempts to employ this idea to construct a corrected $F$ test, it can be demonstrated that its asymptotic size exceeds that of the proposed $F_c$ test, even when Assumption \ref{Fassum:3} is violated. Specifically, let us define
     \begin{equation}
         F_l=\hat{\sigma}_{L}^{-1}\sqrt{n}\left(F-1-\frac{C}{\sqrt{K_n}}\right), \nonumber
     \end{equation}
     then we have
     \begin{theorem} \label{theo:comp}
         Under Assumptions \ref{assum:1} and \ref{assum:2},
     \begin{equation}
         \lim_{n\rightarrow\infty}  \mathbb{P}\left( F_c>\Phi^{-1}(1-\tau) \right) \leq \lim_{n\rightarrow\infty} \mathbb{P}\left( F_l>\Phi^{-1}(1-\tau) \right) .
     \end{equation}
     \end{theorem}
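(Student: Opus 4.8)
The key structural observation is that $F_c$ and $F_l$ are \emph{the same random quantity up to a multiplicative, asymptotically deterministic, positive factor}, so the whole comparison reduces to comparing two scaling constants. Writing $G_n := \sqrt{n}\bigl(F-1-C/\sqrt{K_n}\bigr)$ and using $F-1-C/\sqrt{K_n}=G_n/\sqrt{n}$, I would first record
\begin{equation*}
F_c=\sqrt{\tfrac{K_n(n-K_n)}{2n^2}}\,G_n,\qquad F_l=\hat{\sigma}_{L}^{-1}G_n .
\end{equation*}
The prefactor of $F_c$ converges to $\sqrt{\alpha(1-\alpha)/2}=1/\sigma_{F,0}$, where $\sigma_{F,0}^2:=2/(\alpha(1-\alpha))$ is the balanced/mesokurtic variance identified in Corollary \ref{size_F}, while $\hat{\sigma}_{L}\to\sigma_{L}$ by consistency of the plug-in estimators of $\alpha$ and $\omega$. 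Thus both statistics are positive scalar rescalings of the single core $G_n$.

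Next I would invoke Theorem \ref{theo:F}. Working at the boundary of the null, $\mu_n^2=C\sqrt{K_n}$, the recentering constant $C/\sqrt{K_n}$ coincides with $\mu_n^2/K_n$, and $\mu_n^2/K_n=C/\sqrt{K_n}\to0$, so the CLT applies verbatim and gives $G_n\stackrel{d}{\rightarrow}N(0,\sigma_F^2)$ with $\sigma_F^2$ as in the proof of Corollary \ref{size_F}. By Slutsky, $F_c\stackrel{d}{\rightarrow}N(0,\sigma_F^2/\sigma_{F,0}^2)$ and $F_l\stackrel{d}{\rightarrow}N(0,\sigma_F^2/\sigma_L^2)$. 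Since both limit laws are continuous, the rejection probabilities converge to $1-\Phi\!\bigl(\Phi^{-1}(1-\tau)\,\sigma_{F,0}/\sigma_F\bigr)$ and $1-\Phi\!\bigl(\Phi^{-1}(1-\tau)\,\sigma_{L}/\sigma_F\bigr)$, respectively.

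At this point the entire claim collapses to the scalar inequality $\sigma_{F,0}\geq\sigma_{L}$. Putting both over the common denominator $\alpha^2(1-\alpha)^2$ gives $\sigma_{F,0}^2-\sigma_{L}^2=3(\omega-\alpha^2)/\bigl(\alpha^2(1-\alpha)^2\bigr)$, so the inequality is equivalent to $\omega\geq\alpha^2$. This in turn follows from the Cauchy--Schwarz (power-mean) inequality applied to the leverage scores: $\omega=\lim_n n^{-1}\sum_i P_{ii}^2\geq\bigl(\lim_n n^{-1}\sum_i P_{ii}\bigr)^2=(\lim_n K_n/n)^2=\alpha^2$. Because $\Phi^{-1}(1-\tau)>0$ for $\tau<1/2$ and $\Phi$ is increasing, $\sigma_{F,0}\geq\sigma_L$ yields $1-\Phi\!\bigl(\Phi^{-1}(1-\tau)\sigma_{F,0}/\sigma_F\bigr)\leq 1-\Phi\!\bigl(\Phi^{-1}(1-\tau)\sigma_{L}/\sigma_F\bigr)$, which is exactly the asserted ordering of the asymptotic sizes.

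I expect the only genuinely delicate points to be the two normalizer issues rather than the probabilistic core: establishing $\hat{\sigma}_{L}\stackrel{p}{\rightarrow}\sigma_{L}$ (consistency of the estimated variance, and implicitly that $\sigma_L^2>0$ in the regime where $F_l$ is defined, since $\omega\le\alpha$ can otherwise force the numerator $2\alpha-3\omega+\alpha^2$ below zero), and confirming that $C/\sqrt{K_n}$ exactly matches the recentering $\mu_n^2/K_n$ so that Theorem \ref{theo:F} may be applied without modification. Once these are in place, the remainder is the monotonicity of $1-\Phi$ together with the elementary leverage inequality $\omega\geq\alpha^2$, both of which are routine.
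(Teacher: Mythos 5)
Your proposal is correct and follows essentially the same route as the paper: both derive the limiting rejection probabilities of $F_c$ and $F_l$ as $\Phi\bigl(\sigma_{F_c}\sigma_F^{-1}\Phi^{-1}(\tau)\bigr)$ and $\Phi\bigl(\sigma_{L}\sigma_F^{-1}\Phi^{-1}(\tau)\bigr)$ via the CLT of Theorem \ref{theo:F}, and then reduce the claim to the scalar inequality $\sigma_{L}^2\leq\sigma_{F_c}^2$, which follows from $\omega\geq\alpha^2$ by Cauchy--Schwarz exactly as in the paper's proof of Theorem \ref{theo:unbalanced}. Your explicit computation of $\sigma_{F_c}^2-\sigma_L^2=3(\omega-\alpha^2)/\bigl(\alpha^2(1-\alpha)^2\bigr)$ and your flagging of the implicit boundary-of-null recentering are just more detailed renderings of steps the paper leaves terse.
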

     Hence, even in the presence of asymptotically unbalanced and non-mesokurtic errors, we recommend still utilizing our proposed $F_c$ test as it exhibits superior control over the size.

\noindent\textit{Proof of Theorem \ref{theo:unbalanced}}: 
Adopting the similar arguments in proving Theorem \ref{theo:F}, one can verify that 
\begin{equation}
    \mathbb{P}\left( F_c>\Phi^{-1}(1-\tau) \right) \rightarrow \Phi\left(\frac{\sigma_{F_c}}{\sigma_F}\Phi^{-1}(\tau)\right),  \; \text{with}\;\sigma_{F_c}^2=\frac{2}{\alpha(1-\alpha)}. \nonumber
\end{equation}
Therefore, the asymptotic size of the $F_c$ test depends on the variance-ratio $\sigma^2_{F_c}/\sigma^2_F$. Note that 
\begin{equation}
    \frac{\sigma^2_{F_c}}{\sigma^2_F}-1=\frac{(\omega-\alpha^2)(3-\frac{E(v_1^4)}{\sigma_{vv}^4})}{(\omega-\alpha^2)\frac{E(v_1^4)}{\sigma_{vv}^4}+2\alpha-3\omega+\alpha^2} \nonumber
\end{equation}
and $\alpha^2\leq \omega$ always hold by Cauchy-Schwarz inequality, so
 $\sigma^2_{F_c}/\sigma^2_F$ is greater (smaller) than one when the errors are platykurtic (leptokurtic). Consequently, the $F_c$ test is conservative (oversized). \\

\noindent\textit{Proof of Theorem \ref{theo:comp}}: 
Similar to the proof of Theorem \ref{theo:unbalanced}, one can show that 
\begin{equation}
    \mathbb{P}\left( F_l>\Phi^{-1}(1-\tau) \right) \rightarrow \Phi\left(\frac{\sigma_{L}}{\sigma_F}\Phi^{-1}(\tau)\right). \nonumber
\end{equation}
The proof is then completed by noticing that   $\sigma_{L}^2\leq\sigma_{F_c}^2$.
 
\end{document}